\numberwithin{equation}{section}
\numberwithin{table}{section}\setlength{\multlinegap}{25pt}
\def\2{{1\over2}}
\def\new#1\endnew{{\bf #1}}
\def\ifundefined#1{\expandafter\ifx\csname#1\endcsname\relax}
\let\Msize=\footnotesize             
\def\BM{\Msize\begin{matrix}}           \def\EM{\end{matrix}}
\def\MN M:#1 #2 N:#3 #4 {{(#1_{#2},#3_{#4})}}
\def\MNH M:#1 #2 N:#3 #4 H:#5,#6 [#7]{{(#1_{#2},#3_{#4})^{#5,#6}_{#7}}}
\def\dd{\mathrm{d}}
\def\CF{{\cal F}}
\def\CM{{\cal M}}
\def\CO{{\cal O}}
\def\CW{{\cal W}}
\newcommand{\be}{\begin{equation}}
\newcommand{\ee}{\end{equation}}
\newcommand{\bea}{\begin{eqnarray}}
\newcommand{\eea}{\end{eqnarray}}
\def\IZ{{\mathbb Z}}
\def\IR{{\mathbb R}}
\def\IC{{\mathbb C}}
\def\IP{{\mathbb P}}
\newcommand{\re}{{\rm e}}
\newcommand{\ri}{{\rm i}}
\newcommand{\rd}{{\rm d}}
\newcommand{\ba}{\begin{aligned}}
\newcommand{\ea}{\end{aligned}}
\newcommand{\ben}{\begin{eqnarray}\displaystyle}
\newcommand{\een}{\end{eqnarray}}
\def\Im{\text{Im}}
\theoremstyle{definition}
\newtheorem{defn}{Definition}[section]
\newtheorem{exmp}[defn]{Example}
\theoremstyle{plain}
\newtheorem{prop}[defn]{Proposition}
\newtheorem{lem}[defn]{Lemma}
\newtheorem{conj}[defn]{Conjecture}
\theoremstyle{remark}
\newtheorem{rem}[defn]{Remark}
\title{Topological open strings on orbifolds}
\author{Vincent Bouchard\\
Jefferson Physical Laboratory, Harvard University\\
17 Oxford St., Cambridge, MA 02138, USA\\
E-mail: \email{bouchard@physics.harvard.edu}}
\author{Albrecht Klemm\\
Physikalisches Institut der Universit\"at Bonn\\ 
Nu\ss allee 12, D-53115 Bonn, Germany\\
E-mail: \email{aklemm@th.physik.uni-bonn.de}}
\author{Marcos Mari\~no\\
D\'epartement de Physique Th\'eorique et Section de Math\'ematiques\\
Universit\'e de Gen\`eve, CH 1211 Gen\`eve 4, Suisse\\
E-mail: \email{marcos.marino@physics.unige.ch}}
\author{Sara Pasquetti\\
Institut de Physique, Universit\'e de Neuch\^atel\\
Rue A. L. Breguet 1, CH-2000 Neuch\^atel, Suisse\\
E-mail: \email{sara.pasquetti@unine.ch}}
\abstract{We use the remodeling approach to the B-model topological string in terms of recursion relations 
to study open string amplitudes at orbifold points. To this end, we clarify modular properties 
of the open amplitudes and rewrite them in a form that makes their transformation 
properties under the modular group manifest. We exemplify this procedure for the $\IC^3/\IZ_3$ orbifold 
point of local $\IP^2$, where we present results for topological string amplitudes for genus zero and up to three holes, 
and for the one-holed torus. These amplitudes can be understood as generating functions for either open orbifold Gromov--Witten invariants of $\IC^3 / \IZ_3$, or correlation functions in the orbifold CFT involving 
insertions of both bulk and boundary operators.}
\begin{document}

\section{Introduction}


Topological string theory on Calabi--Yau threefolds has played a crucial role in our understanding of string theory and 
 Gromov--Witten theory. One of the most fascinating aspects of this topological sector of string theory is that 
very often amplitudes can be computed exactly, and their dependence on the moduli can be studied in detail. This has 
led to very rich pictures of the moduli space of the theory, involving
different phases which exhibit different physics \cite{wittenphases,agm}. 

Modular and analytic properties of the amplitudes connect the different phases of the Calabi--Yau moduli space in a
very precise way. Each phase of the moduli space is characterized by a 
set of ``good coordinates,'' and different good coordinates corresponding to 
different phases are related by a transformation in the modular 
group of the theory. As explained in \cite{ABK}, topological string amplitudes are modular objects with specific 
transformation properties under this group, and as one goes from one phase to the other, 
the amplitudes have to be transformed accordingly. For example, when expanded
at the large radius limit in moduli space, topological string 
amplitudes are generating functions of Gromov--Witten invariants. As one moves
away from this point towards different regions 
in moduli space, the large radius expansion eventually ceases to converge, 
but after suitable modular transformations and analytic continuations, 
the topological string amplitudes can be re-expanded in terms of the 
good variables of the new phase. In particular, when going to orbifold points of the moduli space, the amplitudes become generating functions 
for {\it orbifold} Gromov--Witten invariants. A detailed understanding of the
modular transformation properties of the amplitudes makes it 
then possible to relate Gromov--Witten invariants to orbifold Gromov--Witten
invariants, in the spirit of the crepant resolution conjecture \cite{BG,CCIT,Ru}. In \cite{ABK} this was used to calculate generating functions of 
orbifold Gromov--Witten invariants in the case of the $\IC^3/\IZ_3$ orbifold, 
which corresponds to a phase in the moduli space of local $\IP^2$, its crepant resolution.  The predictions obtained in this way have
been later verified mathematically in orbifold Gromov--Witten 
theory~\cite{BaC,BC,CC,CCIT2}, and other examples have been recently calculated \cite{BrT,Coates}. 

A crucial ingredient in the approach of \cite{ABK} is the ability to obtain exact results for the topological string 
amplitudes on the whole of moduli space, so that they can be expanded in
different phases. These exact expressions are typically calculated 
by using the B-model and mirror symmetry. On top of that, it is extremely
useful to write these exact results in a way that makes the transformation
properties manifest. For local Calabi--Yau threefolds, the mirror manifold
reduces to an algebraic curve and the modular group is 
essentially the symplectic group acting on the homology of the surface. 
Topological string amplitudes can then be written in terms of modular forms
with respect to this group, and when the curve has genus one, 
as in the case of the mirror to local $\IP^2$, one can write them in terms of elliptic functions 
\cite{ABK}. 

The results of \cite{ABK} were obtained for {\it closed} string amplitudes, and it is natural to ask how one could extend these results to {\it open} topological string amplitudes. 
As in the closed case, we first need a formalism to
compute open topological string amplitudes exactly on the whole closed and
open moduli space. For the case of toric Calabi--Yau threefolds, this
formalism has been proposed in \cite{MM,BKMP} and it is based on a recursion
relation first obtained in the context of matrix models \cite{eynard, EO}. One
advantage of the framework developed in \cite{MM,BKMP}, as compared to the holomorphic 
anomaly equations of \cite{BCOV,Wa}, is that the amplitudes are completely
fixed by the recursion. It is then natural to use this formalism in order to 
understand the properties of open string amplitudes as one moves in the open
and closed moduli space of toric Calabi--Yau threefolds, 
and in particular to extract information about the open counterparts of 
orbifold Gromov--Witten invariants (which so far have not been defined 
in the mathematical literature). 

In \cite{BKMP} some steps were taken in this direction. In particular we
discussed how to find ``good coordinates'' for the open moduli at the orbifold point, and 
we made a preliminary analysis of the disk amplitude. In this paper we present
a detailed study of open topological string amplitudes at the orbifold point, focusing 
on the case of $\IC^3 / \IZ_3$. First of all, we clarify the transformation
properties of the string amplitudes in the open sector, and we present expressions 
for them which make  their modular transformation properties manifest. 
Since the recursion of \cite{EO,MM,BKMP} is based on the Bergman kernel of the mirror curve, 
our first step is to write it (for a curve of genus one) in terms of elliptic
functions. One can then plug the resulting expression in the recursion to find 
modular expressions for all the open string amplitudes. This leads to considerable improvements in terms of computional efficiency of 
the recursion relations. As a consequence
of this refinement of the formalism of \cite{BKMP}, we are able to calculate
open orbifold string amplitudes at high order, and we present 
explicit expressions for amplitudes with $(g,h)=(0,2)$, $(0,3)$ and $(1,1)$. 
These expressions give generating functions for open orbifold Gromov--Witten
invariants, and from the CFT point of view they compute correlation functions
of arbitrary insertions of both bulk operators, associated with twist fields,
and boundary operators, associated with deformation modes of the D-brane open moduli. 

The organization of the paper is as follows. We start by reviewing the remodeling approach to the B-model using recursion relations in sections 2.1 to 2.3. In section 2.4, we study modularity of the open amplitudes, which we rewrite in a form that makes their transformation properties explicit in section 2.5. Section 3 is then devoted to the study of topological open string amplitudes at the $\IC^3 / \IZ_3$ orbifold point in the moduli space of local $\IP^2$, using the formalism presented in section 2. We also briefly comment on the calculation of the open amplitudes at the conifold point in the moduli space of local $\IP^2$ in section 3.5.

\subsection*{Acknowledgments}
We would like to thank Alireza Tavanfar and Marlene Weiss for collaboration at the initial stages of this work, and Renzo Cavalieri for sharing with us his results on open orbifold Gromov--Witten invariants prior to publication. 
We would also like to thank Paul Johnson, Manfred Herbst, Yongbin Ruan and Ed Segal for interesting discussions.
The work of S.P. was partly supported by the Swiss National Science Foundation and by the European Commission under contracts MRTN-CT-2004-005104.
The work of V.B. is supported in part by the Center for the Fundamental Laws of Nature at Harvard University and by NSF grants PHY-0244821 and DMS-0244464.

\section{Open B-model on mirrors of toric Calabi-Yau threefolds}

\subsection{The geometry}

Consider the A-twisted sigma model on a (noncompact) toric Calabi-Yau
threefold $X$. A-branes are objects in the ``derived Fukaya category'' of $X$;
roughly speaking, they correspond to Lagrangian submanifolds of $X$ with
bundles on them. We consider a simple class of A-branes, given by noncompact
special Lagrangian submanifolds $L \subset X$ with trivial bundle, 
with topology $\IR^2 \times S^1$; those were constructed in
\cite{HarveyLawson,AV,AKV} --- see also \cite{BKMP} for a detailed description.

The mirror theory is a B-twisted sigma model\footnote{The mirror is generally
presented as a Landau-Ginzburg model; we explain the correspondence between
the Landau-Ginzburg model and the sigma model in Appendix A.} on a family
$\pi: Y \to \CM$ of noncompact Calabi-Yau threefolds, where $\CM$ is the
moduli space of the closed B-model. 
Let $z=(z_1,\ldots,z_k)$ be coordinates on $\CM$ centered at a point of
maximally unipotent monodromy. The fiber $Y_z = \pi^{-1}(z_1,\ldots,z_k)$ of the family has the form
\be \label{e:Y_z}
Y_z = \{ w w' = H(x,y;z) \} \subset (\IC)^2 \times (\IC^*)^2,
\ee
where $H(x,y;z)$ is a Laurent polynomial in $x,y \in \IC^*$ of degree 1. The
precise form of $H(x,y;z)$ is dictated 
by the toric data of the mirror $X$. $Y_z$ is a quadric fibration over $(\IC^*)^2$, with degeneration locus the Riemann surface
\be\label{e:sigmazfirst}
\Sigma_z = \{ H(x,y;z)=0 \} \subset (\IC^*)^2.
\ee

B-branes are objects in the derived category of coherent sheaves, some of which correspond to holomorphic submanifolds of $Y_z$ with bundles on them. The B-branes mirror to the simple A-branes considered above can be described as wrapping a holomorphic curve in $Y_z$, with trivial bundle on it. More precisely, fix a point $p_0 \in \Sigma_z$ parameterized by $(x_0,y_0)$, and denote by $C_z(p_0)$ the holomorphic submanifold of $Y_z$ defined by
\be
w' = 0 = H(x_0,y_0;z).
\ee
It is given by the line parameterized by $w$ over the point $p_0 \in \Sigma_z$. This is the holomorphic curve which is wrapped by the B-brane. The open moduli space corresponds to deformations of the B-brane $C_z(p_0)$ in $Y_z$, which are parameterized by the point $p_0 \in \Sigma_z$. As a result, the moduli space of the open B-model on $(Y,C)$ is given by the family of Riemann surfaces $\Sigma \to \CM$, with fiber \eqref{e:sigmazfirst}.

\begin{exmp}
The main example that we will study is the mirror to local $\IP^2$. Let $X
=K_{\IP^2}$ be the total space of the canonical bundle over $\IP^2$. Its
mirror is the family of Calabi-Yau threefolds $Y \to \CM$, where the closed
moduli space $\CM$ is one-dimensional, whose fibers $Y_z$ are given by
(\ref{e:Y_z}) with  
\be \label{e:defmp2}
H(x,y;z)=1+x+y+\frac{z}{x y}.
\ee
The family of Riemann surfaces $\Sigma \to \CM$ has fibers $\Sigma_z$ (\ref{e:sigmazfirst}),
which are elliptic curves with three punctures.
\label{ex:localp2}
\end{exmp}

\subsection{Disk amplitude}

In this paper we focus on the open amplitudes of the B-model. Let us start with the simplest amplitude, the disk amplitude (genus $0$, $1$ hole). Roughly speaking, it is the open analog of the genus $0$ closed amplitude, which corresponds to the prepotential of special geometry of the closed moduli space $\CM$. The disk amplitude on $(Y,C)$ similarly admits a simple definition as follows.

Recall that the moduli space of the open B-model consists in a family of 
Riemann surfaces (with punctures) $\Sigma \to \CM$. Choose an embedding\footnote{The choice of embedding of $\Sigma_z$ in $(\IC^*)^2$ corresponds to a choice of phase and framing of the mirror brane. This was considered in detail in \cite{BKMP}.} of the fibers $\Sigma_z$ in $(\IC^*)^2$,
\be
\Sigma_z = \{ H(x,y;z)=0 \} \subset (\IC^*)^2,
\ee
and define the one-form
\begin{align}
\omega(p) =& \log y(x(p)) \frac{\dd x(p)}{x(p)}\notag\\
=& \log y(x) \frac{\dd x}{x}
\end{align}
on $\Sigma_z$, where $p \in \Sigma_z$ and $x$ is chosen as local coordinate.

\begin{rem}\label{rem:dep}
Note that in the following we will always omit the dependence on $z$ to simplify the notation. But since $\Sigma \to \CM$ is a family of curves, all the objects we define on the fiber $\Sigma_z$ will have an implicit dependence on $z$.
\end{rem}

The main conjecture of \cite{AKV, AV}, which comes from dimensional reduction of the holomorphic Chern-Simons theory on the brane $C$, goes as follows.

\begin{conj}[\cite{AKV,AV}]
The ``Abel-Jacobi'' map
\be
F^{(0,1)} = \int_\gamma \omega(p),
\ee
where $\gamma$ is the chain $[q^*,q]$ and $q^* \in \Sigma_z$ is a reference point, gives the B-model disk amplitude, up to classical terms. $F^{(0,1)}$ should be understood as a series expansion in the local coordinate $x$ near $x=0$, where $x$ corresponds to the open modulus associated to the brane.
\end{conj}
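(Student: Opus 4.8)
The plan is to recover the formula as the dimensional reduction of the holomorphic Chern--Simons functional living on the B-brane $C$, following the logic of \cite{AKV,AV}. Open string field theory on a B-brane wrapping a holomorphic submanifold of the Calabi--Yau threefold $Y_z$ is governed by holomorphic Chern--Simons theory, whose cubic action pairs the gauge field against the holomorphic three-form $\Omega$ of $Y_z$. For a BPS configuration interpolating between the two brane positions $C_z(q^*)$ and $C_z(q)$, the induced spacetime superpotential---which is precisely the disk amplitude $F^{(0,1)}$---is computed by the three-chain integral $F^{(0,1)} = \int_\Gamma \Omega$, where $\Gamma \subset Y_z$ is a three-chain whose boundary lies on the two branes. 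I would take this identification as the physical input and reduce the three-chain integral to the one-chain integral $\int_\gamma \omega$ on $\Sigma_z$.

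First I would make $\Omega$ explicit on the quadric fibration $Y_z = \{ww' = H(x,y)\}$. Using $w,x,y$ as coordinates with $w' = H/w$, the Poincar\'e residue gives
\[
\Omega \;=\; \mathrm{Res}_{ww'=H}\frac{\dd w \wedge \dd w' \wedge \frac{\dd x}{x}\wedge \frac{\dd y}{y}}{ww'-H(x,y)} \;=\; \frac{\dd w}{w}\wedge\frac{\dd x}{x}\wedge\frac{\dd y}{y}.
\]
The brane $C_z(p_0)$ is the $w$-line sitting over $p_0 \in \Sigma_z$ at $w'=0$, and its deformations sweep out points of the degeneration curve $\Sigma_z$, so that the three-chain $\Gamma$ naturally projects onto a path $\gamma = [q^*,q]$ in $\Sigma_z$ joining the two brane locations.

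Next I would build $\Gamma$ as the family of $w$-lines fibered over $\gamma$ and carry out the fiber integration at fixed $x$. The $w$-period $\oint \frac{\dd w}{w} = 2\pi \ri$ around the vanishing cycle of the degenerating quadric, together with $\int^{y(x)} \frac{\dd y}{y} = \log y(x)$ obtained by integrating up to the curve point $y=y(x)$ solving $H(x,y)=0$, collapses $\Omega$ onto the reduced meromorphic one-form $\omega = \log y(x)\,\frac{\dd x}{x}$ on $\Sigma_z$. This turns $\int_\Gamma \Omega$ into $\int_\gamma \omega$; expanding in the local coordinate $x$ near $x=0$, where $x$ is the flat open modulus of the brane, then yields the claimed series for $F^{(0,1)}$, with the reference point $q^*$ and the branch of $\log$ responsible for the additive classical terms.

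The hard part will be making this fiber integration and the emergence of $\log y$ genuinely precise. Since $Y_z$ is noncompact and $\omega$ is a multivalued logarithmic differential with poles at the punctures of $\Sigma_z$, the reduction requires regularizing the $w,w'$ integral, fixing a consistent branch of $\log y$ along $\gamma$, and carefully tracking both the reference point $q^*$ and the framing- and phase-dependent classical terms (the choice of embedding of $\Sigma_z$, as in \cite{BKMP}). Because the statement is ultimately a physical conjecture arising from a formal manipulation of the open-string path integral, I would read this reduction as a heuristic derivation rather than a complete proof; a rigorous version would instead demand matching the $x$-expansion of $\int_\gamma \omega$ against a direct localization computation of open Gromov--Witten invariants, or a worldsheet argument identifying the disk-instanton sum with the Abel--Jacobi periods of $\omega$.
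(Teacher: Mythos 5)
The statement you were asked to prove is in fact stated in the paper as a \emph{conjecture}, with no proof given: the paper's entire justification is the one-line remark that it ``comes from dimensional reduction of the holomorphic Chern--Simons theory on the brane $C$,'' supplemented by a posteriori verification against open A-model computations at large radius and the remark that it is the local analog of the normal-function result of Morrison--Walcher. Your proposal correctly reconstructs exactly that dimensional-reduction heuristic (Poincar\'e residue giving $\Omega = \frac{\dd w}{w}\wedge\frac{\dd x}{x}\wedge\frac{\dd y}{y}$, fiber integration over the $w$-circle and the $y$-segment producing $\log y(x)\,\frac{\dd x}{x}$), and you rightly flag it as a heuristic physical derivation rather than a proof --- which matches the paper's own treatment of the statement.
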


The Abel-Jacobi map is defined on the Jacobian, that is only up to addition of integrals of $\omega(p)$ over one-cycles. But here we will only be interested in the series expansion of the amplitude in the open modulus, and so the ambiguity is irrelevant. Note that this conjecture is the local analog of the result of \cite{MW}, where the disk amplitude is computed in terms of normal functions.

This formula has been verified in many examples, by expanding the disk amplitude near a point of maximally unipotent monodromy in the closed moduli space, and comparing with open A-model amplitudes on the toric mirror. It requires an explicit knowledge of the closed and open mirror maps, which can be understood as solutions of an extended Picard-Fuchs system (the latter was derived in the language of mixed Hodge structures and relative cohomology in \cite{LMW}).

\subsection{General formalism}

We now move on to the general amplitudes $F^{(g,h)}$ with genus $g$ and $h$ holes. 
As for the closed amplitudes $F^{(g)}$, the physical B-model open amplitudes 
are generally non-holomorphic, and satisfy an open analog of the holomorphic
anomaly equations of \cite{BCOV}. However, to compare with the A-model
Gromov-Witten generating functions, one needs to consider the holomorphic 
limit of the physical B-model amplitudes expanded near a special point in the
moduli space. The $F^{(g,h)}$ that we consider here are these holomorphic
objects, rather than the physical B-model amplitudes. Stated from a modularity 
point of view, what we construct here are the quasi-modular forms, rather than the almost holomorphic modular forms \cite{ABK}. We will discuss this point in more detail in the next subsection.

In \cite{BKMP,MM} a general recursive formalism for computing B-model genus $g$, $h$ hole open amplitudes $F^{(g,h)}$ on $(Y,C)$ was proposed. From a mathematical point of view, since the open B-model is not really well understood, this can be taken as a proposal for a definition of the open B-model on these geometries.

Consider again the following data:
\begin{itemize}
\item A family of (punctured) Riemann surfaces $\Sigma \to \CM$ (the open B-model moduli space);
\item A choice of embedding of the fibers $\Sigma_z$ in $(\IC^*)^2$,
\be
\Sigma_z = \{ H(x,y;z)=0 \} \subset (\IC^*)^2.
\ee
\end{itemize}
We claim that these data fully characterize the open B-model on $(Y,C)$, with
arbitrary genus and number of holes.

By projecting onto the $x$-axis we may see $\Sigma_z$ as a branched cover of
$\IC^*$. Denote by $q_i \in \Sigma_z$ the ramification points of the
projection map, such that $\dd x (q_i) = 0$. Let $\lambda_i := x(q_i) \in
\IC^*$ be the branch points. We assume that they all have branching order
two. Then, near $q_i$, there are two points $q, \bar q \in \Sigma_z$ 
with the same projection $x(q) = x(\bar q)$ (those are defined only 
locally near $q_i$). As before, define the one-form $\omega(p)$, which reads in local coordinates
\be
\omega(p) = \log y(x) \frac{\dd x}{x}.
\ee

\begin{defn}
The \emph{Bergman kernel} $B(p,q)$ is the unique bilinear differential on $\Sigma_z$ with a double pole at $p=q$ with no residue, and no other pole. It is normalized by
\be
\oint_{A^I} B(p,q) = 0,
\ee
where $(A^I,B_I)$ is a symplectic basis of cycles on $\Sigma_z$.
\end{defn}

Note that the Bergman kernel is defined on the Riemann surface itself, and
does not depend on the embedding in $(\IC^*)^2$. Its definition however 
requires a choice of symplectic basis of cycles on $\Sigma_z$.

\begin{defn}
Near $q_i \in \Sigma_z$, define the one-form
\be
\dd E_{q, \bar q}(p) = \frac{1}{2} \int_{\xi =q}^{\bar q} B(p,\xi),
\ee
where the integration is in a neighborhood of $q_i$. Note that this is defined only locally near $q_i$.
\end{defn}

We are now ready to state the recursion, which was first derived in the context of matrix models in \cite{eynard,CEO,EO}.

\begin{defn}\label{def:wgh}
Let $\tilde W^{(g, h)} (p_1, p_2, \ldots, p_{h})$, $g,h \in \IZ$, $g\geq 0$, $h \geq 1$, be multilinear differentials on $\Sigma_z$. Fix the initial conditions
\be
\tilde W^{(0,1)}(p_1)=0, \qquad \tilde W^{(0,2)}(p_1,p_2) = B(p_1,p_2).
\ee
Define the remaining differentials by the recursion\footnote{Note that the integrand in the right-hand side is only defined locally near the ramification points $q_i$.}
\begin{align}
\tilde W^{(g,h)}(p_1, p_2 \ldots, p_{h}) =& \sum_{q_i}  \underset{q=q_i}{\rm Res~} {\dd E_{q, \bar q}(p_1) \over \omega(q) - \omega(\bar q)} \Big ( \tilde W^{(g-1,h+1)} (q, \bar q, p_2, \ldots, p_{h} )\notag\\
& \qquad +\sum_{l=0}^g \sum_{J\subseteq H} \tilde W^{(g-l,|J|+1 )}(q, p_J) \tilde W^{(l,|H|-|J| +1)} (\bar q, p_{H\backslash J}) \Big),
\label{e:rec1}
\end{align}
where we used the notation $H=\{2,3, \ldots, h\}$, and given any subset 
$J=\{i_1, \ldots, i_j\}\subseteq H$ we defined $p_J=\{p_{i_1}, \ldots, p_{i_j}\}$.
\end{defn}

There is a second recursion which reads as follows.

\begin{defn}\label{def:fg}
Let $F^{(g)}$, $g \in \IZ$, $g \geq 2$, be functions on $\Sigma_z$ defined by
\be
F^{(g)} = \frac{1}{2g-2} \sum_{q_i}  \underset{q=q_i}{\rm Res~} \theta(q) \tilde W^{(g,1)}(q),
\label{e:rec2}
\ee
where $\theta(q)$ is any primitive of $\omega(q)$, \emph{i.e.} $\dd \theta(q) = \omega(q)$.
\end{defn}

We define the function $F^{(1)}$ by:

\begin{defn}\label{def:f1}
Define
\be\label{e:deff1}
F^{(1)} = -{1\over 2} \log \tau_B - \frac{1}{24} \log \prod_{i} \omega'(q_i),
\ee
where $\tau_B$ is the Bergman tau-function and 
\be
\omega'(q_i) ={1\over \rd z_i(p)} \rd \left( {\log y(x)\over x}\right)\biggl|_{p=q_i}, \qquad z_i(p) ={\sqrt {x(p)-x(q_i)}}.
\ee
We refer the reader to \cite{EO} for more details.
\end{defn}

The main conjecture of \cite{BKMP,MM}, which relates the objects defined above recursively to the B-model amplitudes, could be stated as follows.

\begin{conj}
Let $F^{(0)}$ be the prepotential of special geometry, $F^{(1)}$ be as in definition \ref{def:f1}, and the $F^{(g)}$'s for $g \geq 2$ be as in definition \ref{def:fg}. 

For $g \geq 0$, $h \geq 1$, and $(g,h) \neq (0,1), (0,2)$, define the multilinear differentials 
\be
W^{(g,h)}(p_1, \ldots, p_h) = \tilde W^{(g,h)}(p_1, \ldots, p_h),
\ee
using definition \ref{def:wgh}. Let 
\be
W^{(0,2)}(p_1,p_2) = B(p_1,p_2) - \frac{\dd p_1 \dd p_2}{(p_1-p_2)^2},
\ee
and 
\be
W^{(0,1)}(p) = \omega(p).
\ee 
Define
\be
F^{(g,h)} = \int_{\gamma_1} \cdots \int_{\gamma_h} W^{(g,h)}(p_1, \ldots, p_h),
\ee
where the $\gamma_i$'s are the chains $[q_i^*, q_i]$, with the $q_i^* \in \Sigma_z$ reference points. 

The $F^{(g)}$ constructed above are the genus $g$ closed B-model amplitudes on $Y$, and
the $F^{(g,h)}$ are the genus $g$, $h$ hole open B-model amplitudes on $(Y,C)$. The $F^{(g,h)}$ should be understood as series expansions in the local coordinates $x_i := x(p_i)$, which correspond to the open moduli associated to the branes.
\end{conj}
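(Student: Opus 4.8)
The plan is to prove the equality by induction on the ``Euler characteristic'' $2g-2+h$, matching the recursively defined objects $W^{(g,h)}$ against the honest open and closed B-model amplitudes. By construction the $W^{(g,h)}$ satisfy the recursion \eqref{e:rec1}, so the entire content of the statement is twofold: (i) the base cases reproduce the known low-order amplitudes, and (ii) the honest amplitudes satisfy \emph{the same} recursion on the mirror curve $\Sigma_z$. Granted both, uniqueness of the recursion forces equality order by order in $2g-2+h$, and the interpretation of $F^{(g,h)}$ as a generating function in the open moduli then follows from expanding the chain integrals $\int_{\gamma_i} W^{(g,h)}$ near $x_i=0$.

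First I would dispose of the base cases. In the closed sector $F^{(0)}$ is the special-geometry prepotential by definition, and the genus-one expression of Definition~\ref{def:f1} in terms of the Bergman tau-function reproduces the known BCOV genus-one result \cite{BCOV}. In the open sector $F^{(0,1)}=\int_\gamma\omega$ is exactly the Abel--Jacobi disk amplitude, which is the content of the conjecture of \cite{AKV,AV} assumed above; and $F^{(0,2)}$, built from the Bergman kernel with its universal double pole $\dd p_1\,\dd p_2/(p_1-p_2)^2$ subtracted, is the standard annulus propagator of the B-model. These fix the initial data of the recursion unambiguously.

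The crux is the inductive step: showing that the \emph{physical} amplitudes obey \eqref{e:rec1}. The route I would follow exploits the fact, recalled in the text, that \eqref{e:rec1} is the topological-expansion form of the loop equations of a matrix model. Concretely, one realizes the all-genus open$+$closed free energy of the toric Calabi--Yau as (a chain of) matrix model(s) whose spectral curve is precisely $\Sigma_z$ with one-form $\omega$; the branch points $q_i$, the local involution $q\mapsto\bar q$, and the kernel $\dd E_{q,\bar q}/(\omega(q)-\omega(\bar q))$ are then read off from the spectral data, and Eynard's theorem identifies the matrix-model correlators with the $W^{(g,h)}$. On the A-model side one expresses the open and closed Gromov--Witten generating functions through the topological vertex and matches them, after the open and closed mirror maps, with these correlators; the resolvent insertions correspond to the chains $\gamma_i=[q_i^*,q_i]$, and expanding in the endpoints $x_i$ yields the open orbifold invariants. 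A parallel, purely B-model route is to prove that both sides satisfy the BCOV holomorphic anomaly equations \cite{BCOV} and their open analog, with identical boundary behavior and polynomiality in the propagators, invoking the quasi-modular reorganization emphasized in the next subsection \cite{ABK}; uniqueness of the solution then closes the induction.

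The main obstacle is establishing, rather than assuming, that the honest amplitudes satisfy the recursion --- equivalently, justifying a matrix-model (or integrable-hierarchy) representation whose spectral curve is the mirror curve and whose topological expansion reproduces the genuine Gromov--Witten generating functions. The open sector is the delicate part: one must track the framing and phase data entering the embedding $\Sigma_z\subset(\IC^*)^2$, fix the reference points $q_i^*$ and the open mirror map so that the chain integrals expand correctly in the $x_i$, and control the winding/boundary contributions of the open maps. Matching these against the analytic structure of $\int_{\gamma_i}W^{(g,h)}$, uniformly across phases so as to reach the orbifold point, is where the real difficulty lies.
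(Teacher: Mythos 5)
There is a genuine gap here, and it is worth being precise about what it is. The statement you are trying to prove is stated in the paper as a \emph{conjecture}, and the paper never proves it: after listing supporting evidence, the authors write explicitly that they ``will take this conjecture for granted and explore some of its consequences.'' Your proposal has the right skeleton for a would-be proof (base cases plus induction on $2g-2+h$ via uniqueness of the recursion), but the entire mathematical content sits in your inductive step --- showing that the honest open and closed B-model amplitudes satisfy \eqref{e:rec1} --- and that step is exactly the conjectural part. Neither of the two routes you offer can close it. The matrix-model route works only for the special geometries where a large $N$ dual with spectral curve $\Sigma_z$ actually exists (e.g.\ local $\IP^1\times\IP^1$ via Chern--Simons on $S^3/\IZ_2$); the paper states plainly that ``in general no matrix model representation is known,'' and the chiral-boson arguments of \cite{MM,DV} that replace it are physical derivations, not theorems. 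Matching against the topological vertex after the mirror maps is likewise verification at large radius in examples, not an argument valid for all $(g,h)$ and all phases.

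Your ``parallel, purely B-model route'' has an additional concrete flaw: solutions of the holomorphic anomaly equations are not unique, so ``uniqueness of the solution then closes the induction'' does not hold. At each genus (and each $(g,h)$ in the open case) there is a holomorphic ambiguity --- a finite number of constants that the anomaly equations do not fix --- and the paper itself emphasizes this, both in its discussion of direct integration and in subsection \ref{s:con}, where it shows that the open amplitudes have \emph{no} gap behavior at the conifold, i.e.\ precisely the boundary conditions one would need to pin down the open ambiguities are missing. Note also that your base case $F^{(0,1)}=\int_\gamma\omega$ is itself only the conjecture of \cite{AKV,AV} (stated as such in the paper), so even the initial data of your induction rests on an unproven assertion. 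In short: your proposal is a reasonable reorganization of the paper's heuristic evidence, but it assumes, in its key step, the very statement to be proven; as of this paper the statement remains a conjecture with no known proof.
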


Note that as for the disk amplitude $F^{(0,1)}$, the $F^{(g,h)}$ are only defined modulo integration over closed cycles; but again, this ambiguity will be irrelevant since we only consider instanton expansions of the amplitudes.

\begin{rem}
Note that the conjecture for $F^{(1)}$ should probably follow from the result of Dubrovin and Zhang for the $G$-function associated to Frobenius manifolds \cite{DZ}, which was also studied by Givental \cite{Gi}. It can also be understood from a topological field theory point of view as in \cite{BCOV2}.
\end{rem}

There are various arguments behind this conjecture. First, a strong piece of evidence comes from direct calculation. In \cite{BKMP,MM}, various amplitudes for the mirrors of $\IC^3$, local $\IP^1$, $\IP^2$, $\IP^1 \times \IP^1$, $F_1$, $F_2$ were computed. By expanding the amplitudes near a point of maximally unipotent monodromy and plugging in the open and closed mirror maps, it was shown that one recovers the open A-model amplitudes on the toric mirrors. This however only tests the conjecture at large radius; in \cite{BKMP} the conjecture was also tested at the orbifold point of local $\IP^1 \times \IP^1$, by comparing with perturbative Chern-Simons theory on the lens space $S^3 / \IZ_2$.

A more conceptual argument for the conjecture goes as follows. The recursions were derived by \cite{CEO, EO} in the context of matrix models. When $\Sigma$ is the spectral curve of a matrix model, the recursions \eqref{e:rec1} and \eqref{e:rec2} respectively generate the correlation functions and free energies of the matrix model. For some B-model geometries, using large $N$ dualities on the mirror side, we can find matrix model representations with spectral curve $\Sigma_z$, which justifies the conjecture. However, in general no matrix model representation is known; but it was argued first in \cite{MM}, and then in much more detail in \cite{DV}, that the B-model amplitudes should indeed satisfy the recursions \eqref{e:rec1} and \eqref{e:rec2}, whether there is a matrix model representation or not. This involves a detailed analysis of the B-model, understood in the chiral boson picture developed in \cite{ADKMV}. One can also show that the amplitudes obtained through the recursion, after restoring non-holomorphicity using modularity as in \cite{ABK}, satisfy the holomorphic anomaly equations (and their open analogs) \cite{EMO}. In any case, in the following we will take this conjecture for granted and explore some of its consequences.

\subsection{Modularity}

In the previous subsection we introduced a recursive formalism to compute open
and closed B-model amplitudes. While the formalism is very elegant
conceptually, and provides a complete solution to the B-model on these
geometries, it turns out to be rather complicated computationally. One reason
is that the formalism makes no explicit use of the modular properties of the
amplitudes; on the contrary, the intermediate step of taking residues at the
branch points destroys the symmetry of the amplitudes. Indeed, the branch
points are in general complicated functions of $z$, since the projection 
$\Sigma_z \to \IC^*$ is a branched cover. But the final amplitudes are simple 
functions of $z$; in fact, only symmetric combinations of the branch points, 
which are simple rational functions of $z$, appear in the final amplitudes. 

It thus seems desirable to recast the recursion in a different form, bypassing
the intermediate step of taking residues at the branch points. However, one
problem is that the integrand in the right hand side of \eqref{e:rec1} is only
defined locally near the branch points. Hence, one cannot simply deform the
contour integral to pick up residues at the other poles in a straightforward
way. Indeed, localizing the integrand at the branch points turns out to be
crucial in the derivation of the recursion in the matrix model context 
(see for instance pp.14-15 of \cite{CEO}), in order to get rid of unfixed
polynomials. So it seems that the intermediate step plays a more important 
role that one would have expected at first sight.\footnote{It is tempting to
 speculate that the process of localizing at the branch points is a B-model 
mirror analog to the process of using localization with respect to a torus action in Gromov-Witten theory.}

Even though it seems difficult to reformulate the recursion in a more
computationally effective way, what we can do is use our knowledge of the
modular properties of the amplitudes to rewrite the amplitudes \emph{a posteriori}. 
That is, using modularity and regularity of the amplitudes we write down a 
general ansatz for the amplitudes, either in terms of modular forms, or as functionals of solutions of the Picard-Fuchs equations. At each genus $g$ and number of holes $h$, the ansatz involves rational functions in the open and closed moduli comprising a finite number of unknown parameters. The latter can be fixed by comparing the ansatz with the result obtained from the recursion \eqref{e:rec1}. Alternatively, the parameters can be fixed by comparing with a mirror calculation at large radius using the topological vertex formalism \cite{topvertex}. 

These formulae 
prove to be very useful in studying the amplitudes at various points in 
the moduli space, as we will do in the next section. However, the rational
functions become rather involved and increasingly difficult to determine  for
higher genus and larger number of holes.

\subsubsection{Picard-Fuchs equations, monodromy and modularity of the closed amplitudes}

Given a family of Calabi-Yau threefolds $Y \to \CM$, it is standard to
associate a system of differential equations, called the \emph{Picard-Fuchs
  equations}, which annihilate periods of the holomorphic volume form
$\Omega_z$ on the fiber $Y_z$. In the noncompact setting, the Picard-Fuchs 
system can be extracted either by taking the limit of a compact
threefold~\cite{CKYZ}, or by considering the equivalent Landau-Ginzburg
setting~\cite{Gi,HV}. 
Solutions to the Picard-Fuchs equations provide a set of flat coordinates on $\CM$.

When $Y$ is of the form studied previously, it can be shown that the geometry
``reduces'' to the family of curves $\Sigma \to \CM$, and the Picard-Fuchs equations annihilate periods of the
one-form $\omega(p)$ over one-cycles on the Riemann surface $\Sigma_z$. 
From now on, we focus on the case where $\Sigma_z$ is a genus one curve. Let
$(A,B)$ be a canonical basis of one-cycles on the genus one curve $\Sigma_z$. 
Apart from a constant solution, there are two more solutions to the Picard-Fuchs equations, which provide a basis of dual periods:
\be
T = \oint_A \omega(p), \qquad T_D = \oint_B \omega(p).
\ee
The Picard-Fuchs differential equations have regular singular points, around which the periods have monodromy. The monodromy group is a finite index subgroup of $SL(2,\IZ)$.

A natural question is to study modularity of the B-model amplitudes with respect to the monodromy group. This question was approached for the closed amplitudes from physical principles in \cite{ABK}. The physical closed B-model amplitudes $\CF^{(g)}$ are invariant under the monodromy group --- indeed, this is required for consistency of the physical theory all over the moduli space $\CM$ --- but they are non-holomorphic. This can be reformulated in terms of modularity with respect to the modular parameter of $\Sigma_z$:
\be
\tau = \frac{\partial T_D}{\partial T} = \frac{\partial^2 \CF^{(0)}}{\partial T^2},
\ee
where $\CF^{(0)}$ is the prepotential of special geometry giving the genus $0$ closed B-model amplitude. In this language, the statement becomes that for $g \geq 2$, the physical amplitudes $\CF^{(g)}$ are \emph{almost holomorphic modular forms} with respect to the monodromy group \cite{ABK}. However, there is a canonical isomorphism between the ring of almost holomorphic modular forms and the ring of \emph{quasi-modular forms} --- forms that transform with a shift \cite{KZ}. This is given by ``taking the holomorphic limit'' of $\CF^{(g)}$, which breaks the modular invariance by keeping only the constant term in the finite expansion in $\Im(\tau)^{-1}$. We thus obtain the holomorphic closed B-model amplitudes $F^{(g)}$, which are quasi-modular with respect to the monodromy group. Thoses are the amplitudes that were constructed through the recursion.

\subsubsection{Modularity of the open amplitudes}

We now want to understand the modular properties of the open amplitudes $F^{(g,h)}$, which are the holomorphic limits of the monodromy invariant physical amplitudes $\CF^{(g,h)}$. This was studied in \cite{EMO,EO} using the recursion.

Let $\tau$ be the modular parameter of $\Sigma_z$, which parameterizes the upper half plane. Let
\be\label{e:mod}
\tilde \tau = \frac{A \tau + B}{C \tau +  D}, \qquad \begin{pmatrix} A & B \\ C & D \end{pmatrix} \in \Gamma \subset SL(2,\IZ)
\ee
be a symplectic transformation of the periods in the monodromy group $\Gamma$.

Under \eqref{e:mod}, the Bergman kernel transforms as
\be\label{e:kbk}
\tilde B(p,q)=B(p,q) -  2\pi \ri \, u(p) (C \tau + D)^{-1} C u(q),
\ee
where $u(p)$ is the holomorphic differential. The shift makes the Bergman kernel a quasi-modular form of weight $0$. Through the recursion \eqref{e:rec1}, this induces quasi-modular properties for all the open amplitudes $W^{(g,h)}$. One can compute the explicit transformation properties of the differentials $W^{(g,h)}$ by plugging in the transformation properties of the Bergman kernel directly in the recursion, as was done in \cite{EMO, EO}; we will not repeat the analysis here. Instead, what we are doing next is to use our knowledge of modularity to write down explicit expressions for the (low genus and number of hole) amplitudes in terms of modular forms, and as functionals of solutions of the Picard-Fuchs equations (the periods).

\subsection{Modular forms and functionals}

As we have just seen, the multilinear differentials $W^{(g,h)}$ are
quasi-modular forms of weight $0$ with respect to the monodromy group. 
They are obtained by taking the holomorphic limit of the non-holomorphic 
differentials $\CW^{(g,h)}$, which correspond to the physical amplitudes, 
therefore are monodromy invariant. As a consequence, the holomorphic
amplitudes $W^{(g,h)}$ can be universally written as functionals of the
periods and their derivatives, where the periods are functions of some local
coordinates on the moduli space. 

The functional point of view provides a very useful way of computing modular
transformations of the amplitudes, since changing the period in the functional
directly implements the symplectic transformation between the periods. 
In other words, the choice of period in the functional corresponds to a 
choice of modular parameter, or equivalently to a choice of canonical basis 
of cycles in the definition of the Bergman kernel. This approach 
renders the computation of the amplitudes everywhere in the moduli 
space straightforward.

To see how it goes, let us start by deriving a general expression for 
the annulus amplitude (or the Bergman kernel) in terms of modular forms, 
which is the main ingredient in the recursion relation, and induces the 
quasi-modular properties of the amplitudes. We then explain how it can be 
written generally as a functional; we will propose an exact form for the 
functional in the next section when we specialize to the local $\IP^2$
geometry. Finally we propose a general ansatz for the higher order amplitudes, 
which we use in the next section to derive functional expressions and compute 
the amplitudes at the orbifold point of local $\IP^2$.

\subsubsection{The annulus amplitude}

As usual, we start with a family of (punctured) Riemann surfaces $\Sigma \to \CM$ (the open B-model moduli space), and a choice of embedding of the fibers $\Sigma_z$ in $(\IC^*)^2$,
\be
\Sigma_z = \{ H(x,y;z)=0 \} \subset (\IC^*)^2.
\ee
We specialize to the case where $\Sigma_z$ is a genus one curve. Denote by $q_i \in \Sigma_z$ the ramification points of the projection map onto the $x$-axis, and by $\lambda_i := x(q_i) \in \IC^*$ the branch points.

When $\Sigma_z$ has genus one, the annulus amplitude $W^{(0,2)}$ can be written in terms of the Weierstrass elliptic function, using uniformization parameters for the elliptic curve. Alternatively, when $\Sigma_z$ has four distinct branch points $\lambda_i$, $i=1,\ldots,4$, one can work directly on the $\IC^*$ which is the image of the $x$-projection. In terms of $x$-projected variables $x_1,x_2 \in \IC^*$ (\emph{i.e.} local coordinates $x_1:=x(p_1)$ and $x_2:=x(p_2)$), Akemann derived in \cite{Ak} a formula for the annulus amplitude, which reads
\begin{align}
W^{(0,2)}(p_1,p_2) =& \frac{\rd x_1 \rd x_2}{4 \sqrt{\sigma(x_1)\sigma(x_2)}}\left(\frac{M(x_1,x_2)+M(x_2,x_1)}{(x_1-x_2)^2} - (\lambda_1 - \lambda_3)(\lambda_2-\lambda_4) \frac{E(k)}{K(k)} \right) \notag\\
&\quad- \frac{\rd x_1 \rd x_2}{2 (x_1-x_2)^2},
\label{e:ak}
\end{align}
where
\be
M(x_1,x_2)=(x_1-\lambda_1)(x_1-\lambda_2)(x_2-\lambda_3)(x_2-\lambda_4),
\ee
\be
\sigma(x) = \prod_{i=1}^4 (x-\lambda_i),
\ee
and $K(k)$ and $E(k)$ are elliptic integrals with modulus
\be
k^2 = \frac{(\lambda_1-\lambda_2)(\lambda_3-\lambda_4)}{(\lambda_1-\lambda_3)(\lambda_2-\lambda_4)}.
\ee
Note that the amplitude depends on a choice of ordering of the branch points, which corresponds to a choice of canonical basis of cycles on $\Sigma_z$.

Let us start by rewriting the amplitude in terms of modular forms.

\begin{prop}
Let
\be
S_k = \sum_{1 \leq j_1 < j_2 < \ldots < j_k \leq 4} \lambda_{j_1} \cdots \lambda_{j_k}
\ee
be the elementary symmetric polynomials in the four branch points, and let
\be\label{e:holdiff}
 u(x) \rd x = \ri \frac{\sqrt{(\lambda_1-\lambda_3)(\lambda_2-\lambda_4)}}{4 \sqrt{\sigma(x)} K(k)} \rd x
\ee
be the holomorphic differential.
The annulus amplitude can be written as
\be\label{e:modann}
W^{(0,2)}(p_1,p_2) = \left(- \frac{1}{2 (x_1-x_2)^2} + \frac{f_0^{(0,2)}(x_1,x_2)}{4 \sqrt{\sigma(x_1)\sigma(x_2)}} + \frac{\pi^2}{3} u(x_1) E_2(\tau) u(x_2) \right) \rd x_1 \rd x_2,
\ee
where $\tau$
is the modular parameter, $E_2(\tau)$ is the second Eisenstein series, and the rational function $f_0^{(0,2)}(x_1,x_2)$ reads\footnote{Recall from remark \ref{rem:dep} that we do not write explicitly the dependence on $z$ for simplicity.}
\be
f_0^{(0,2)}(x_1,x_2) = \frac{6 x_1^2 x_2^2 - 3 x_1 x_2(x_1+x_2) S_1 + (x_1^2 + 4 x_1 x_2 + x_2^2) S_2 - 3 (x_1+x_2) S_3 + 6 S_4}{3 (x_1-x_2)^2}.
\ee

\end{prop}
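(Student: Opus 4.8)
The plan is to reduce the proposition to a single classical identity of elliptic-function theory together with a polynomial verification. The term $-\tfrac{1}{2(x_1-x_2)^2}\rd x_1\rd x_2$ is common to \eqref{e:ak} and \eqref{e:modann}, so it suffices to match the remaining pieces. The only transcendental object in Akemann's formula is the ratio $E(k)/K(k)$, whereas the only transcendental object in \eqref{e:modann} is $E_2(\tau)$; hence the heart of the matter is the relation
\be\label{e:EKE2}
\frac{E(k)}{K(k)} = \frac{\pi^2}{12\, K(k)^2}\,E_2(\tau) + R(\lambda),
\ee
where $R(\lambda)$ is a rational function of the branch points alone, independent of $x_1,x_2$. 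Granting \eqref{e:EKE2}, the proof splits cleanly into a transcendental part, matched using the definition of $u$, and a purely rational part, matched by a polynomial identity.

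For the transcendental part I would substitute \eqref{e:EKE2} into \eqref{e:ak}. Using \eqref{e:holdiff} one computes
\be
u(x_1)u(x_2) = -\frac{(\lambda_1-\lambda_3)(\lambda_2-\lambda_4)}{16\,\sqrt{\sigma(x_1)\sigma(x_2)}\,K(k)^2},
\ee
the sign coming from $\ri^2=-1$. The $E_2$-part produced by \eqref{e:EKE2} is then exactly $-\tfrac{(\lambda_1-\lambda_3)(\lambda_2-\lambda_4)}{4\sqrt{\sigma(x_1)\sigma(x_2)}}\cdot\tfrac{\pi^2 E_2(\tau)}{12 K(k)^2} = \tfrac{\pi^2}{3}u(x_1)E_2(\tau)u(x_2)$, reproducing the last term of \eqref{e:modann}. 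To establish \eqref{e:EKE2} itself I would pass to the Weierstrass description of the curve $y^2=\sigma(x)$: the period $\oint_A \rd x/\sqrt{\sigma}$ gives $K(k)$ (this is what fixes the normalization of $u$), while the second-kind period $\oint_A x\,\rd x/\sqrt{\sigma}$ reduces to a combination of $E(k)$ and $K(k)$. Expressing the latter through the Weierstrass quasi-period $\eta_1=\zeta(\omega_1)$ and invoking $E_2(\tau)=\tfrac{12\omega_1}{\pi^2}\eta_1$, in the normalization where $2\omega_1$ is the $A$-period, together with Legendre's relation, yields \eqref{e:EKE2} with an explicit symmetric $R(\lambda)$.

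For the rational part the substitution leaves the polynomial identity
\be
f_0^{(0,2)}(x_1,x_2) = \frac{M(x_1,x_2)+M(x_2,x_1)}{(x_1-x_2)^2} - (\lambda_1-\lambda_3)(\lambda_2-\lambda_4)\,R(\lambda),
\ee
which I would verify by expanding $M(x_1,x_2)+M(x_2,x_1)$ and rewriting its coefficients in terms of the elementary symmetric polynomials $S_k$: one checks that the $x$-dependent part agrees with the numerator of $f_0^{(0,2)}$ and that the $x$-independent remainder is precisely $-(\lambda_1-\lambda_3)(\lambda_2-\lambda_4)R(\lambda)$. A useful internal check is that at $x_1=x_2=x$ the numerator collapses to $6\sigma(x)$, so that $f_0^{(0,2)}/\bigl(4\sqrt{\sigma(x_1)\sigma(x_2)}\bigr)$ develops exactly the double pole $\tfrac{1}{2(x_1-x_2)^2}$ needed to cancel the explicit subtraction, consistent with $W^{(0,2)}$ being regular on the diagonal.

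The main obstacle is the transcendental identity \eqref{e:EKE2}: both the precise constant $\pi^2/12$ and the explicit form of $R(\lambda)$ are sensitive to the elliptic-function conventions (which half-period is called $\omega_1$, the $q$-normalization of $E_2$, and the branch-point ordering hidden in $k^2$). A subtle consistency requirement is that, although $M(x_1,x_2)+M(x_2,x_1)$ and $(\lambda_1-\lambda_3)(\lambda_2-\lambda_4)R(\lambda)$ are each only invariant under the pairing $\{\lambda_1,\lambda_2\}\leftrightarrow\{\lambda_3,\lambda_4\}$, their difference $f_0^{(0,2)}$ must be fully symmetric in all four branch points; this symmetry restoration both pins down $R(\lambda)$ and provides a stringent check on the whole computation.
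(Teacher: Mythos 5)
Your proposal is correct and follows essentially the same route as the paper's proof: the decomposition into a transcendental and a rational part, the matching of the $E_2$ term against $\tfrac{\pi^2}{3}u(x_1)E_2(\tau)u(x_2)$ via the definition \eqref{e:holdiff}, and the symmetric-polynomial rewriting of the leftover are exactly the paper's steps, and your key identity is the paper's $E(k)K(k)=\pi^2\bigl(\tfrac{1}{12}E_2(\tau)+\omega_1^2 e_3\bigr)$ in disguise, with $R(\lambda)=-4e_3/\bigl((\lambda_1-\lambda_3)(\lambda_2-\lambda_4)\bigr)=(2-k^2)/3$. The only cosmetic difference is provenance: the paper obtains that identity by manipulating Akhiezer's identities for elliptic integrals, while you derive it from the Weierstrass quasi-period $\eta_1$ and $E_2(\tau)=12\,\omega_1\eta_1/\pi^2$, which amounts to the same classical fact.
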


\begin{proof}
We start with Akemann's formula \eqref{e:ak}. Let us introduce
\be
\omega_1 = \frac{2 \ri}{\pi} \frac{K(k)}{\sqrt{(\lambda_1-\lambda_3)(\lambda_2-\lambda_4)}}, \qquad e_3 = \frac{1}{12}\left( S_2 - 3 (\lambda_1 \lambda_2 + \lambda_3 \lambda_4) \right).
\ee
$e_3$ is one of the three roots of the elliptic curve in Weierstrass form. Then, manipulating some of Akhiezer's identities for elliptic integrals \cite{Akh}, we obtain the identity
\be
E(k) K(k) = \pi^2 \left( \frac{1}{12} E_2 (\tau) + \omega_1^2 e_3 \right).
\ee
From this we rewrite the second term in \eqref{e:ak} as
\be
 - \frac{1}{4 \sqrt{\sigma(x_1)\sigma(x_2)}} (\lambda_1 - \lambda_3)(\lambda_2-\lambda_4) \frac{E(k)}{K(k)} = \frac{\pi^2}{3} u(x_1) E_2(\tau) u(x_2) + \frac{e_3}{\sqrt{\sigma(x_1)\sigma(x_2)}},
\ee
using the definition of the holomorphic differential above. By expanding the function $M(x_1,x_2)$ and combining with the $e_3$ term, we can rewrite the other terms of \eqref{e:ak} in terms of elementary symmetric polynomials of the branch points, and we obtain \eqref{e:modann}.
\end{proof}

\begin{rem}
In \eqref{e:modann}, the only term which is not quite modular invariant is the
term with $E_2(\tau)$. Since $E_2(\tau)$ is a quasi-modular form of weight
$2$, and the holomorphic differentials are modular of weight $-1$, we see
explicitly that the annulus amplitude is a quasi-modular form of weight $0$, 
as it should. The shift in the modular transformation of the annulus amplitude 
comes, as in the closed case \cite{ABK}, from the  shift in the modular 
transformation of the second Eisenstein series $E_2(\tau)$.
\end{rem}

\begin{rem}
Note that the function $f_0^{(0,2)}(x_1,x_2)$ is also rational in $z$ --- hence manifestly modular invariant --- since it involves only symmetric combinations of the branch points, which are necessarily rational functions of $z$. The function $f_0^{(0,2)}(x_1,x_2)$
corresponds to the ``holomorphic ambiguity'' in the integration of the holomorphic anomaly equation for the open amplitudes.
\end{rem}

Let us now define
\be
G(\tau) = \frac{E_2(\tau)}{3 \omega_1^2},
\label{e:calG}
\ee
which is a function of $z$ through the definition of $\omega_1$, and depends
on a choice of modular parameter $\tau$, but does not depend on the open
string variables  $x_1$ and $x_2$. The annulus can now be rewritten as
\be\label{e:annfunct}
W^{(0,2)}(p_1,p_2) = - \frac{\rd x_1 \rd x_2}{2 (x_1-x_2)^2} + 
\frac{f_0^{(0,2)}(x_1,x_2)+G(\tau)}{4 \sqrt{\sigma(x_1)\sigma(x_2)}}\rd x_1 \rd x_2.
\ee
The Bergman kernel $B(p_1,p_2)$ is obtained by changing the sign in front of the first term.

$G(\tau)$ plays an important role in the following, since it encodes the quasi-modular properties of the amplitudes. As a result, $G(\tau)$ can be expressed as a \emph{functional} of the period
$T$ and its derivative; in which case we will denote it as $G[T;z]$. The choice of period
$T$ corresponds to the choice of modular parameter $\tau$. The exact form of
$G[T;z]$ depends on the curve $\Sigma_z$; we will present it for the mirror of 
local $\IP^2$ in the next section.

To summarize, we now have an expression for the annulus amplitude in 
terms of modular forms, which can be rewritten as a functional of 
the period and its derivatives, using $G[T;z]$. Let us now study the 
higher order amplitudes.

\subsubsection{Higher amplitudes}

Now that we have a functional expression for the annulus amplitude
(\ref{e:annfunct}),  which is the main ingredient of the recursion, 
we can derive the principal functional form of the higher genus amplitudes
from (\ref{e:rec1}). 

\begin{lem}\label{l:genstruct}
For $g \geq 0$, $h \geq 1$, and $(g,h) \neq (0,1), (0,2)$, the general form 
of the amplitudes is
\be\label{general}
W^{(g,h)}(p_1,\ldots,p_h) = \frac{\rd x_1 \cdots \rd x_h}{\Delta^{2g-2+h} \prod_{i=1}^h \sqrt{\sigma(x_i)}} \sum_{i=0}^{3g-3+2h} G^i[T;z] f_i^{(g,h)}(x_1, \ldots,x_h),
\ee
where
\be
\Delta = \prod_{i<j}(\lambda_i-\lambda_j)^2
\ee
is the discriminant of the curve. The functions $f_i^{(g,h)}(x_1,
\ldots,x_h)$ are rational in their arguments and in the closed parameter $z$. 
Moreover, they have the form
\be
\label{e:holamb}
f_i^{(g,h)}(x_1, \ldots,x_h) = \frac{Q_i^{(g,h)}(x_1, \ldots,x_h)}{\left( \prod_{j=1}^h \sigma(x_j) \right)^{3g-2+h}} ,
\ee 
where the $Q_i^{(g,h)}(x_1, \ldots,x_h)$ are polynomials of finite degree in their arguments and in $z$.  
\end{lem}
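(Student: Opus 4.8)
The plan is to prove the statement by induction on the integer $\chi = 2g - 2 + h$, which the recursion \eqref{e:rec1} decreases by exactly one at every step: the term $\tilde W^{(g-1,h+1)}$ has $2(g-1)-2+(h+1) = \chi - 1$, and in each bilinear term $\tilde W^{(g-l,|J|+1)}\,\tilde W^{(l,|H|-|J|+1)}$ the two factors have $\chi$-values summing to $\chi - 1$. The relevant base case is $\chi = 1$, namely $(g,h) = (0,3)$ and $(1,1)$; both are produced by \eqref{e:rec1} from a single residue applied to products of Bergman kernels, whose functional form \eqref{e:annfunct} already displays the structure \eqref{general}. Note that the annulus $(0,2)$ itself, although excluded from the claimed rational structure, supplies the base input for the recursion and has $G$-degree $1 = 3\cdot 0 - 3 + 2\cdot 2$, consistent with the general bound, so the $G$-counting below is uniform.

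First I would record the local behaviour of the recursion kernel $\frac{\rd E_{q,\bar q}(p_1)}{\omega(q)-\omega(\bar q)}$ near each ramification point $q_i$, using the local coordinate $z_i = \sqrt{x - \lambda_i}$ in which the sheet involution acts by $z_i \mapsto -z_i$. Two facts drive the argument. First, as a one-form in the free variable $p_1$, the kernel inherits from the Bergman kernel the factor $\rd x_1 / \sqrt{\sigma(x_1)}$ times a function rational in $x_1$ and in the $\lambda_j$. Second, $\sqrt{\sigma(x)}$ is odd under the involution, so the inductive factors $1/\sqrt{\sigma(x(q))}$ and $1/\sqrt{\sigma(x(\bar q))}$ carried by the integrand combine into $1/\sigma(x(q))$, whose Laurent expansion in $z_i$ begins at order $z_i^{-2}$ with leading coefficient $1/\prod_{j\ne i}(\lambda_i - \lambda_j)$. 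The spectator variables $p_2,\dots,p_h$ are untouched by $\mathrm{Res}_{q=q_i}$, so they retain their inductive $\rd x_j/\sqrt{\sigma(x_j)}$ structure; together with the kernel's contribution for $p_1$ this reproduces the global prefactor $\rd x_1\cdots \rd x_h / \prod_i \sqrt{\sigma(x_i)}$.

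The $G$-degree is then the clean part of the argument. By the inductive hypothesis $\tilde W^{(g-1,h+1)}$ has $G$-degree at most $3g+2h-4$, and a short computation shows each bilinear term has total $G$-degree at most $3g+2h-4$ as well. The kernel contributes an extra power of at most one, since only its Bergman-kernel numerator carries the $E_2$-term, while the denominator $\omega(q)-\omega(\bar q)$ is built from $\omega$ alone and is $G$-free. Because $G = G[T;z]$ is a single function independent of the branch points at fixed modular parameter, $\mathrm{Res}_{q=q_i}$ commutes with multiplication by powers of $G$, so the output has $G$-degree at most $3g+2h-4+1 = 3g-3+2h$, as claimed.

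The genuinely laborious part, and the main obstacle, is pinning down the \emph{exact} exponents $2g-2+h$ on $\Delta$ and $3g-2+h$ on each $\sigma(x_j)$ in \eqref{general}--\eqref{e:holamb}. Here I would track the branch-point dependence through the residue: the $z_i^{-2}$ pole of the combined integrand, together with the derivatives generated by taking the residue, produces after summation over $i$ expressions of the form $\sum_i (\cdots)/\prod_{j\ne i}(\lambda_i-\lambda_j)^{m}$, which, re-expressed in elementary symmetric functions of the $\lambda_j$, assemble into a single power of the discriminant $\Delta$ in the denominator. One must verify that precisely one extra factor of $\Delta$ is generated per recursion step, so that $\Delta^{-(\chi-1)}$ from the inputs becomes $\Delta^{-\chi}$. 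A parallel count of the pole orders in each $x_j$ at the branch points, bounded inductively and increased in a controlled way by the kernel, yields the denominator $(\prod_j\sigma(x_j))^{-(3g-2+h)}$; here one must also check that the a priori diagonal poles at $x_i = x_j$ inherited from the $1/(x_1-x_2)^2$ piece of the annulus cancel for $\chi \ge 1$, leaving the numerators $Q_i^{(g,h)}$ polynomial. Rationality of the $f_i^{(g,h)}$ in $z$ (rather than mere algebraicity in the $\lambda_j$) then follows because the residue is summed symmetrically over all $q_i$, so only symmetric functions of the branch points survive, and these are rational in $z$. The delicate point throughout is that the exponents must come out exactly as stated, which forces control not merely of upper bounds but of the precise order of vanishing and of poles in both $z_i$ and the $\lambda_j$ at each step.
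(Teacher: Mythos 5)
Your proposal is correct at the level of rigor of the paper's own argument --- which is itself explicitly only a sketch --- but it organizes the key counting differently. Where you run an induction on $\chi=2g-2+h$, the quantity that the recursion \eqref{e:rec1} lowers by exactly one, the paper instead unrolls the recursion into the Feynman-graph representation of definition 4.5 of \cite{EO}: each amplitude is a sum over graphs with $3g-3+2h$ edges, each edge carrying a factor $B$ or $\rd E_{q,\bar q}(p)/(\omega(q)-\omega(\bar q))$ that is affine in $G[T;z]$, which yields the degree bound $3g-3+2h$ in one stroke. Your step-by-step count (inputs of $G$-degree at most $3g+2h-4$, kernel adding at most one, $G$ passing through the residue because it is independent of the integration variable) proves the same bound without invoking the graph formalism as a black box, at the price of having to verify the base cases $(0,3)$ and $(1,1)$ explicitly. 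The remaining two ingredients coincide with the paper's: rationality in the $x_i$ because multiplying by the $\sqrt{\sigma(x_i)}$ prefactors cancels the branch cuts (your parity argument under the hyperelliptic involution, giving $1/\sqrt{\sigma(x(q))}\cdot 1/\sqrt{\sigma(x(\bar q))}=\mp 1/\sigma(x(q))$, is a more explicit version of this one-line observation), and rationality in $z$ because the sum of residues over all ramification points leaves only symmetric functions of the $\lambda_i$. Finally, you correctly identify the exact exponents $\Delta^{2g-2+h}$ and $\left(\prod_j \sigma(x_j)\right)^{3g-2+h}$, together with the cancellation of diagonal poles, as the genuinely delicate step; be aware that the paper does not prove this either --- it states that ``the analysis is rather subtle,'' leaves the details to the reader, and offers only the heuristic that each $\sigma(x)^{-k}$ pole contributes $\Delta^k$ after summing over branch points while the double poles of the Bergman kernels combine into the $\sigma(x_j)$ denominators. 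So your proposal is no less complete than the paper's proof; it replaces the graph-counting mechanism by a self-contained induction and flags the same unfinished exponent bookkeeping that the paper also defers.
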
 

\begin{proof}[Sketch of the proof]
We obtain this general form by close inspection of the recursion \eqref{e:rec1}, and using the functional formula \eqref{e:annfunct} for the annulus amplitude. Let us simply sketch the main lines of the argument.

Let
\be
W^{(g,h)}(p_1,\ldots,p_h) = w^{(g,h)} (x_1, \ldots, x_h) \rd x_1 \cdots \rd x_h.
\ee
First, it is clear from the definition that the functions
\be
\sqrt{\sigma(x_1)} \cdots \sqrt{\sigma(x_h)} w^{(g,h)} (x_1, \ldots, x_h)
\ee
are rational in the $x_i$'s, since multiplying by the square roots amounts to cancelling the branch cuts.

Second, by pushing down the recursion \eqref{e:rec1} in order to obtain the analogs of Feynman rules, as in definition 4.5 of \cite{EO}, we see that each amplitude is represented by a graph with $3g-3+2h$ edges. Each edge gives a factor of either $B(p,q)$ or ${\dd E_{q, \bar q}(p) \over \omega(q) - \omega(\bar q)}$. Since both of these factors are polynomials of degree $1$ in $G[T;z]$, we obtain that $w^{(g,h)}$ must be a polynomial of order $3g-3+2h$ in $G[T;z]$. So what we know so far is that
\be
w^{(g,h)} (x_1, \ldots, x_h) = \frac{1}{\prod_{i=1}^h \sqrt{\sigma(x_i)}} \sum_{i=0}^{3g-3+2h} G^i[T;z] \tilde f_i^{(g,h)}(x_1, \ldots,x_h),
\ee
where the $\tilde f_i^{(g,h)}(x_1, \ldots,x_h)$ are rational in the $x_i$'s. It is also clear that the $\tilde f_i^{(g,h)}$ are rational in $z$, since we are summing over branch points, hence the $\tilde f_i^{(g,h)}$ can be expressed in terms of elementary symmetric polynomials in the branch points, which must be rational functions of $z$.

Finally, the denominators of the functions $\tilde f_i^{(g,h)}(x_1, \ldots,x_h)$ can be obtained from the pole structure of the integrand in the recursion \eqref{e:rec1}. The analysis is rather subtle, and we leave the details to the reader. Roughly speaking, after taking residues and summing over branch points, each pole of the form $\sigma(x)^{-k}$ contributes a factor of $\Delta^k$ in the denominator, and the double poles of the Bergman kernels combine to give the factors of $\sigma(x)$ in the denominator.
\end{proof}

For a particular geometry, by comparing the generic form of the amplitudes \eqref{general} with the explicit result obtained with the recursion, we can determine the functions $f_i^{(g, h)}(x_1, \ldots,x_h)$ at each genus and number of holes. Once this is done, the main advantage of the functional form of the amplitudes is that the computation of the amplitudes at various points in the moduli space  simply amounts  to inserting the right period $T$ in the functional. We exemplify this procedure in detail in the next section by studying the mirror of local $\IP^2$ at the $\IC^3 / \IZ_3$ orbifold point. 

Note that the general form of the amplitudes \eqref{general} was obtained directly by inspection of the recursion and using the functional formula for the annulus amplitude. Alternatively, it could have been obtained through direct integration of the open version of the holomorphic anomaly equation, in which case the functions $f_0^{(g,h)}(x_1, \ldots,x_h)$ would correspond to the holomorphic ambiguities. This complementary approach sheds new light on the structural constaints of the amplitudes coming directly from modularity; we hope to report on that in future work.

\section{Open orbifold Gromov-Witten invariants of $\IC^3 / \IZ_3$}

In this section we apply our formalism to the study of the mirror to local $\IP^2$ at the orbifold point in moduli space.

\subsection{Geometry}

We consider the geometry described in Example \ref{ex:localp2}. $\Sigma \to \CM$ is the one-parameter family of genus one Riemann surfaces with three punctures. We choose the following embedding for the fibers, 
\be\label{e:sigmaz}
\Sigma_z = \left \{y^2 + y(1+x) + z x^3 =0 \right \} \subset (\IC^*)^2.
\ee
We consider the B-model on this geometry, with B-branes wrapping the curve $C \subset Y$ as usual. The mirror theory is the A-model on the target space $X = K_{\IP^2}$, with a noncompact A-brane wrapping a special Lagragian submanifold of topology $\IR^2 \times S^1$ (see \cite{AV,AKV,BKMP} for a detailed description of these branes). The parameterization of the curve $\Sigma_z$ above corresponds on the A-model side to an ``outer brane with zero framing'', in the nomenclature of \cite{AKV,BKMP}. Unless specified, all our calculations in this section will be in this parameterization.

By mirror symmetry, the closed A-model moduli space is isomorphic to $\CM$. It has two patches, which correspond to two phases of the A-model. In each phase, there is a limit point near which the A-model amplitudes have a convergent expansion, and become the amplitudes of a non-linear sigma model (coupled to two-dimensional gravity). In the first patch, the limit point is the \emph{large radius point}, which is located at $z=0$. The amplitudes expanded near this point become generating functions of Gromov-Witten invariants of $X = K_{\IP^2}$. In the second patch, the limit point is the \emph{orbifold point}, located at $z= \infty$; a good local coordinate is $\psi = z^{-1/3}$. The amplitudes expanded near this point become generating functions of orbifold Gromov-Witten invariants of $X' = \IC^3 / \IZ_3$. As a result, moving from one patch to the other in $\CM$ induces a topologically-changing transition of the target space.

This analysis also extends to the open sector. In the large radius patch, the amplitudes $F^{(g,h)}$ expanded near the large radius point become generating functions for open Gromov-Witten invariants of $(X,L)$, where $L$ is the special Lagrangian submanifold mirror to $C$. Open Gromov-Witten invariants are defined in terms of stable maps from bordered Riemann surfaces with Lagrangian boundary conditions \cite{KL} --- see also \cite{GZ}. If $X$ admits a $U(1)$ action which fixes $L$, then the $U(1)$ acts naturally on the space of stable maps, and one can use localization to compute open Gromov-Witten invariants \cite{KL}. 

In the orbifold patch, one expects a similar story to hold, and the amplitudes expanded near the orbifold point should be generating functions for open orbifold Gromov-Witten invariants of $(X',L')$. Here, $L'$ is a Lagrangian submanifold of $\IC^3$ which is fixed by the $\IZ_3$ action, hence descends to a Lagrangian submanifold of the orbifold. This Lagrangian $L'$ corresponds to the original Lagrangian $L$ at the 
large radius point, and should exist as a consequence of the A-version of the McKay correspondence for derived categories. 
Therefore, one can consider stable maps from bordered Riemann surfaces to the orbifold $\IC^3 / \IZ_3$, in such a way that the boundaries are mapped to $L'$, and construct the corresponding open orbifold Gromov--Witten invariants. One could then follow the approach of \cite{KL} in the context of orbifolds, and use localization with respect to a $U(1)$ action to compute open orbifold Gromov-Witten invariants of $\IC^3/ \IZ_3$. Such open orbifold Gromov-Witten invariants have not been defined in the mathematical literature yet. However, Renzo Cavalieri informed us that he is presently working on this \cite{Ca}. In particular, he has managed to compute disk orbifold Gromov-Witten invariants of $\IC^3 / \IZ_3$ using localization of $\IZ_3$-Hodge integrals. His calculation matches perfectly with the results we present in subsection \ref{s:disk}, as we explain there.

Another useful point of view on these open orbifold Gromov--Witten 
invariants is to consider the topological string theory near the orbifold point as a perturbed ${\cal N}=2$ orbifold conformal field theory (CFT) coupled to gravity. 
From this point of view, 
the open topological string 
amplitude $F^{(g,h)}$ is a generating function of arbitrary 
insertions of bulk and boundary operators of the orbifold CFT. 
In the case of $\IC^3/\IZ_3$ there is only one bulk operator $\CO$. This is a twist operator which corresponds to a blow-up 
mode of the orbifold singularity, \emph{i.e.} to a 
deformation mode of the closed string modulus. In the presence of Lagrangian boundary conditions speficied by $L'$, one also has boundary preserving operators. These operators correspond to the 
insertion of open string states on the boundaries of the Riemann surface which maps to $L'$, and they are in one-to-one correspondence 
with $H^1(L, {\rm End}(E))$, where $E$ is an appropriate vector bundle on $L$ \cite{witten,aspinwall}. Since we have open strings with 
$h$ boundaries, the most general configuration can be obtained by considering 
$h$ branes wrapping $L'$. In our case $b_1(L')=1$, therefore there will be $h$ (integrated) boundary operators 
$\Psi_\ell$, $\ell=1, \cdots, h$, corresponding to the $h$ branes wrapping $L'$. We then 
have 
\be
F^{(g,h)} =\Big\langle \exp\Bigl( T_{orb} \CO + \sum_{\ell=1}^h X_\ell \Psi_\ell \Bigr) \Big\rangle_{g,h} =
\sum_{j, i_1, \cdots, i_h \ge 0}  \frac{1}{j!} { N^{(g,h)}_{(i_1, \cdots, i_h),j} T^j_{orb} X_{1}^{i_1}\cdots X_h^{i_h}}, 
\ee
where
\be
N^{(g,h)}_{(i_1, \cdots, i_h),j} ={1\over i_1! \cdots i_h!} \langle \CO^j  \Psi_1^{i_1} \cdots \Psi_h^{i_h} \rangle_{g,h}
\ee
and the vevs are calculated for the twisted ${\cal N}=2$ SCFT of the orbifold coupled to gravity on a Riemann surface $\Sigma_{g,h}$. The numbers $N^{(g,h)}_{(i_1, \cdots, i_h),j} $ should be identified with the open orbifold Gromov--Witten invariants. The combinatorial factor 
$i_1! \cdots i_h!$ is included in the invariant in order to agree with the conventions of Cavalieri for the open orbifold Gromov--Witten invariants 
to which we will compare our results later on. 

Our goal in this section is to use mirror symmetry and our B-model recursive formalism to compute generating functions of open orbifold Gromov-Witten invariants of $\IC^3 / \IZ_3$. This can be done in two ways; either by extracting the B-model amplitudes at the orbifold point from the large radius ones using the quasi-modular properties of the amplitudes, or by generating the amplitudes directly at the orbifold point using the functional expressions derived in the previous section. But before doing that, we need to understand the open and closed mirror maps near the orbifold point, in order to map the B-model amplitudes to the A-model amplitudes.

\subsection{Open and closed mirror maps}

\subsubsection{Closed mirror map}

The closed mirror map provides a local isomorphism between the closed A- and B-model moduli spaces. One needs to compute the flat coordinate near a given point of $\CM$, which is given by a solution of the associated Picard-Fuchs system. Inversion of the flat coordinate gives the closed mirror map. 

For the case under consideration, one obtains a single Picard-Fuchs equation, which reads
\be \label{e:PFclo}
\big(\Theta + 3 z (3 \Theta + 2 ) (3 \Theta+1) \big) \Theta f = 0,
\ee
with the logarithmic derivative $\Theta = z \frac{ \partial}{\partial z}$.

The constant function $f=1$ is always a solution of \eqref{e:PFclo}. Near $z=0$, the two other solutions are
\begin{align}\label{e:solns}
T(z) =& \log z -6 z + 45 z^2 - 560 z^3 + \ldots := \log z +\sigma(z), \notag \\
T_D(z) =& (\log z)^2 + 2 \sigma(z) \log z - 18 z + \frac{423}{2} z^2 - 2972 z^3 + \ldots,
\end{align}
$T(z)$ is the flat coordinate in the large radius patch. At $z=0$, $T(0) \to - \infty$, and the expansion parameter is set to $Q = \re^{T}$. The closed mirror map in this patch, which expresses $z$ in terms of the flat parameter $T$, is then given by
\be
z(Q) = Q + 6 Q^2 + 9 Q^3 + 56 Q^4 + \ldots
\ee

In the orbifold patch, the two non-trivial solutions to \eqref{e:PFclo} read
\be
B_k(\psi) = {(-1)^{k+1}  \psi^k \over k} ~_3 F_2 \left( {k \over 3}, {k \over 3}, {k \over 3}; {2 k \over 3}, 1 + {k \over 3}; \left( -{\psi \over 3} \right)^3 \right), 
\ee
with $k=1,2$, and we used the local coordinate $\psi = z^{-1/3}$. Using the explicit expansion of the hypergeometric system we get
\be
B_k(\psi)= \sum_{n \geq 0} {(-1)^{3 n+k+1} \psi^{3 n + k} \over (3 n + k)!} \left({ \Gamma \left( n + {k \over 3} \right) \over \Gamma \left( {k \over 3} \right) } \right)^3.
\ee
The flat parameter in this patch reads \cite{ABK}
\be
T_{orb}(\psi) = B_1(\psi),
\ee
and the dual period is $T_{orb,D}(\psi) = B_2(\psi)$. At $\psi=0$, we get $T_{orb}(0)=0$, hence $T_{orb}$ itself is a good expansion parameter. The closed mirror map reads
\be
\psi(T_{orb}) = T_{orb}+\frac{1}{648}T_{orb}^4-\frac{29 }{3674160}T_{orb}^7+\frac{6607
   }{71425670400}T_{orb}^{10} + \ldots
\ee

\subsubsection{The open mirror map}

The open mirror map extends the isomorphism to the open sector, which in the case under consideration is the fiber of the moduli space $\Sigma \to \CM$. Again, one needs to determine the open flat coordinate, which is a solution of the extended Picard-Fuchs system, as derived in \cite{LM,LMW}. The open mirror map is given by inverting this open flat coordinate. We refer the reader to \cite{BKMP,LM,LMW} for a detailed explanation of the extended Picard-Fuchs system.

In the large radius patch, it was shown in \cite{AKV,LM} that the open flat coordinate is given by
\be
X(x,z) = x \re^{\frac{1}{3}(\log z - T(z))},
\ee
where $x$ is the local coordinate $x$ on $\Sigma_z$, and $T(z)$ is the closed flat coordinate.
Note that $X(x,z)$ is monodromy-invariant under $z \mapsto \re^{2 \pi i} z$. At $(x,z)=(0,0)$, we have $X \to 0$, hence it is a good expansion parameter. The open mirror map becomes
\be
x(Q,X) = X (1 - 2 Q + 5 Q^2-32 Q^3 + \ldots).
\ee

In the orbifold patch, we argued in \cite{BKMP} --- by requiring that the disk amplitude, when expressed in flat coordinates, be monodromy-invariant under the $\IZ_3$ orbifold monodromy $\psi \mapsto \re^{2 \pi i/3} \psi$, which fixes the open flat coordinate uniquely, up to scale --- that the open flat coordinate must be given by
\be
X_{orb}(x,\psi) = x z^{1/3} = x \psi^{-1}.
\ee
The open mirror map simply becomes
\be\label{e:openorb}
x(X_{orb},T_{orb}) = X_{orb} \psi(T_{orb}),
\ee
where $\psi(T_{orb})$ is the closed mirror map.

\subsection{Quasi-modular transformations}

Let us start by computing the amplitudes explicitly, using the quasi-modular transformation of the amplitudes from large radius to the orbifold point.

\subsubsection{Disk amplitude}\label{s:disk}

For completeness, let us review the calculation of the orbifold disk amplitude, which was done in \cite{BKMP}. Recall that the disk amplitude is simply given by the Abel-Jacobi map
\be
F^{(0,1)} = \int \log(y(x)) \frac{\dd x}{x},
\ee
up to classical terms. $y(x)$ is obtained by solving the curve $\Sigma_z$ and keeping the relevant branch:
\be
y(x) = \frac{1}{2}\left(1 + x + {\sqrt{(1+x)^2  - 4\,z\,x^3}}\right).
\ee 
We want to expand the Abel-Jacobi map at the orbifold point. Remark that the open mirror map \eqref{e:openorb} is linear in $\psi$. Hence we must plug in the open mirror map before expanding in the closed coordinate to get a meaningful expansion. This being done, we get the orbifold disk amplitude
\be
F_{orb}^{(0,1)} = \sum_{i,j} \frac{1}{j!} N_{i,j}^{(0,1)} X_{orb}^i T_{orb}^j,
\ee
with the invariants given in table \ref{t:disk}.
\begin{table}
\begin{center}
\begin{footnotesize}
$\begin{array}{|c|cccccccccc|}
\hline
&&&&&i&&&&&\\
\hline
 j & 1 & 2 & 3 & 4 & 5 & 6 & 7 & 8 & 9 & 10 \\
\hline
 0 & 0 & 0 & -\frac{1}{3} & 0 & 0 & -\frac{1}{4} & 0 & 0 & -\frac{10}{27} & 0 \\
 1 & 1 & 0 & 0 & \frac{1}{2} & 0 & 0 & \frac{6}{7} & 0 & 0 & 2 \\
 2 & 0 & -\frac{1}{2} & 0 & 0 & -\frac{6}{5} & 0 & 0 & -\frac{15}{4} & 0 & 0 \\
 3 & 0 & 0 & \frac{2}{3} & 0 & 0 & 4 & 0 & 0 & 20 & 0 \\
 4 & \frac{1}{27} & 0 & 0 & -\frac{40}{27} & 0 & 0 & -\frac{154}{9} & 0 & 0 & -\frac{3400}{27} \\
 5 & 0 & -\frac{5}{54} & 0 & 0 & \frac{206}{45} & 0 & 0 & \frac{3215}{36} & 0 & 0 \\
 6 & 0 & 0 & \frac{10}{27} & 0 & 0 & -\frac{160}{9} & 0 & 0 & -\frac{4940}{9} & 0 \\
 7 & -\frac{29}{729} & 0 & 0 & -\frac{1432}{729} & 0 & 0 & \frac{19586}{243} & 0 & 0 & \frac{2820200}{729} \\
 8 & 0 & \frac{197}{1458} & 0 & 0 & \frac{15514}{1215} & 0 & 0 & -\frac{384575}{972} & 0 & 0 \\
 9 & 0 & 0 & -\frac{2}{3} & 0 & 0 & -\frac{292}{3} & 0 & 0 & \frac{5540}{3} & 0 \\
 10 & \frac{6607}{19683} & 0 & 0 & \frac{80456}{19683} & 0 & 0 & \frac{5544602}{6561} & 0 & 0 & -\frac{90503800}{19683} \\
 11 & 0 & -\frac{63107}{39366} & 0 & 0 & -\frac{945934}{32805} & 0 & 0 & -\frac{214690135}{26244} & 0 & 0 \\
 12 & 0 & 0 & \frac{8074}{729} & 0 & 0 & \frac{53768}{243} & 0 & 0 & \frac{21092500}{243} & 0 \\
 13 & -\frac{4736087}{531441} & 0 & 0 & -\frac{51705832}{531441} & 0 & 0 & -\frac{307254682}{177147} & 0 & 0 & -\frac{528718078600}{531441} \\
 14 & 0 & \frac{58248455}{1062882} & 0 & 0 & \frac{906117742}{885735} & 0 & 0 & \frac{8720423035}{708588} & 0 & 0\\
\hline
\end{array}$
\end{footnotesize}
\caption{Some invariants $N_{i, j}^{(0,1)}$ for the orbifold disk amplitude of $\IC^3 / \IZ_3$ at zero framing.}\label{t:disk}
\end{center}
\end{table}

\subsubsection{Incorporating framing}

As we mentioned earlier, the calculation above was done for an outer brane with zero framing. However, for the orbifold disk amplitude the calculation can be easily generalized to arbitrary framing. From a Gromov-Witten point of view, framing corresponds to a choice of torus action in the localization process. Hence the calculation at arbitrary framing is relevant for comparison with localization computations in Gromov-Witten theory.

Recall from \cite{BKMP} that a framing transformation of the brane is given by reparameterizing the embedding of the fibers $\Sigma_z$ in $(\IC^*)^2$ by
\be
(x_f, y_f) = (x y^f, y),
\ee
where $(x_f, y_f)$ are the new coordinates, and $f \in \IZ$ is the framing. In particular, the embedding of $\Sigma_z$ becomes
\be\label{e:sigmazfr}
\Sigma_z = \left \{ y_f^{3 f+2} + y_f^{3 f+1} + x_f y_f^{2 f+1} + z x_f^3 = 0 \right \} \subset (\IC^*)^2.
\ee
We compute the disk amplitude for this curve as
\be
F^{(0,1)}_f = \int \log(y_f(x_f) ) \frac{\rd x_f}{x_f},
\ee
where the function $y_f(x_f)$ is obtained by solving \eqref{e:sigmazfr} for $x_f$ (as a series expansion). Plugging in the mirror map, we obtain the invariants presented in table \ref{t:diskfr}, for general framing $f$.

\begin{table}
\begin{center}
\begin{footnotesize}
$\begin{array}{|c|cccc|}
\hline
&&&i&\\
\hline
 j & 1 & 2 & 3 & 4\\
\hline
 0 & 0 & 0 & -\frac{1}{3} & 0 \\
 1 & 1 & 0 & 0 & f+\frac{1}{2} \\
 2 & 0 & -f-\frac{1}{2} & 0 & 0 \\
 3 & 0 & 0 & 3 f^2+3 f+\frac{2}{3} & 0 \\
 4 & \frac{1}{27} & 0 & 0 & -\frac{8}{27} \left(54 f^3+81 f^2+37 f+5\right) \\
 5 & 0 & -\frac{5}{54} (2 f+1) & 0 & 0 \\
 6 & 0 & 0 & \frac{5}{27} \left(9 f^2+9 f+2\right) & 0 \\
 7 & -\frac{29}{729} & 0 & 0 & -\frac{8}{729} \left(1890 f^3+2835 f^2+1303 f+179\right) \\
 8 & 0 & \frac{197 (2 f+1)}{1458} & 0 & 0 \\
 9 & 0 & 0 & \frac{1}{3} \left(-9 f^2-9 f-2\right) & 0 \\
 10 & \frac{6607}{19683} & 0 & 0 & \frac{8 \left(102870 f^3+154305 f^2+71549 f+10057\right)}{19683} \\
 11 & 0 & -\frac{63107 (2 f+1)}{39366} & 0 & 0 \\
 12 & 0 & 0 & \frac{4037}{729} \left(9 f^2+9 f+2\right) & 0 \\
 13 & -\frac{4736087}{531441} & 0 & 0 & -\frac{8 \left(65783718 f^3+98675577 f^2+45818317 f+6463229\right)}{531441} \\
 14 & 0 & \frac{58248455 (2 f+1)}{1062882} & 0 & 0\\
\hline
\end{array}$
\end{footnotesize}
\caption{Some invariants $N_{i, j}^{(0,1)}$ for the orbifold disk amplitude of $\IC^3 / \IZ_3$ at general framing $f$.}\label{t:diskfr}
\end{center}
\end{table}

As we mentioned already, the framing $f$ is correlated to the choice of torus weights for localization of the Hodge integrals in Gromov--Witten theory. Renzo Cavalieri has implemented the Hodge integral calculation for the disk amplitude of $\IC^3 / \IZ_3$ \cite{Ca}. It turns out that the most natural choice of torus weights in Gromov--Witten theory does not correspond to $f=0$, but rather to $f=-2/3$ (or $f=-1/3$). To ease comparisons, we present in table \ref{t:disk23} the disk invariants for $f=-2/3$. Rather amazingly, these invariants are precisely equal to the invariants computed by Cavalieri in orbifold Gromov--Witten theory! Since the Gromov--Witten calculation is done on the A-model side, this comparison also shows that our choice of open orbifold mirror map \eqref{e:openorb}, which was argued in \cite{BKMP} from monodromy considerations, is correct, including the scale.

It may seem however odd to assign a non-integral value to $f$; it would be interesting to understand this issue better. Presumably, the denominator of $3$ comes from the orbifold $\IZ_3$ action at the orbifold point --- indeed, framing has so far only been interpreted from a large radius point of view in topological strings. Note however that non-integral framings have already been considered, although in a different context \cite{DF}.

\begin{table}
\begin{center}
$\begin{array}{|c|cccccccccc|}
\hline
&&&&&i&&&&&\\
\hline
 j & 1 & 2 & 3 & 4 & 5 & 6 & 7 & 8 & 9 & 10 \\
\hline
 0 & 0 & 0 & -\frac{1}{3} & 0 & 0 & \frac{1}{12} & 0 & 0 & -\frac{1}{27} & 0 \\
 1 & 1 & 0 & 0 & -\frac{1}{6} & 0 & 0 & \frac{5}{63} & 0 & 0 & -\frac{4}{81} \\
 2 & 0 & \frac{1}{6} & 0 & 0 & -\frac{4}{45} & 0 & 0 & \frac{7}{108} & 0 & 0 \\
 3 & 0 & 0 & 0 & 0 & 0 & 0 & 0 & 0 & 0 & 0 \\
 4 & \frac{1}{27} & 0 & 0 & -\frac{8}{81} & 0 & 0 & \frac{35}{243} & 0 & 0 & -\frac{400}{2187} \\
 5 & 0 & \frac{5}{162} & 0 & 0 & -\frac{188}{1215} & 0 & 0 & \frac{875}{2916} & 0 & 0 \\
 6 & 0 & 0 & 0 & 0 & 0 & 0 & 0 & 0 & 0 & 0 \\
 7 & -\frac{29}{729} & 0 & 0 & -\frac{248}{2187} & 0 & 0 & \frac{5705}{6561} & 0 & 0 & -\frac{146800}{59049} \\
 8 & 0 & -\frac{197}{4374} & 0 & 0 & -\frac{10972}{32805} & 0 & 0 & \frac{221221}{78732} & 0 & 0 \\
 9 & 0 & 0 & 0 & 0 & 0 & 0 & 0 & 0 & 0 & 0 \\
 10 & \frac{6607}{19683} & 0 & 0 & \frac{10984}{59049} & 0 & 0 & \frac{889805}{177147} & 0 & 0 & -\frac{74714800}{1594323} \\
 11 & 0 & \frac{63107}{118098} & 0 & 0 & \frac{385132}{885735} & 0 & 0 & \frac{51307949}{2125764} & 0 & 0 \\
 12 & 0 & 0 & 0 & 0 & 0 & 0 & 0 & 0 & 0 & 0 \\
 13 & -\frac{4736087}{531441} & 0 & 0 & -\frac{6768584}{1594323} & 0 & 0 & \frac{17027675}{4782969} & 0 & 0 & -\frac{33798787600}{43046721} \\
 14 & 0 & -\frac{58248455}{3188646} & 0 & 0 & -\frac{381155716}{23914845} & 0 & 0 & \frac{3576521095}{57395628} & 0 & 0\\
\hline
\end{array}$
\caption{Some invariants $N_{i, j}^{(0,1)}$ for the orbifold disk amplitude of $\IC^3 / \IZ_3$ at framing $f=-2/3$.}\label{t:disk23}
\end{center}
\end{table}

\subsubsection{Annulus amplitude}

We now want to compute the annulus amplitude, which is slightly more complicated, since it has non-trivial modular properties and transforms with a shift. More precisely, recall from \eqref{e:kbk} that the annulus transforms as
\be\label{e:annorb}
W_{orb}^{(0,2)}(p_1,p_2)=W^{(0,2)}(p_1,p_2) -  2\pi \ri \, u(p_1) (C \tau + D)^{-1} C u(p_2),
\ee
where $W^{(0,2)}(p_1,p_2)$ is the large radius annulus amplitude and $(C \tau + D)^{-1} C$ comes from the modular transformation of the period matrix $\tau$ from large radius to the orbifold.

The first step consists then in computing the annulus amplitude at large radius, using Akemann's formula \eqref{e:ak}. This was done in \cite{MM,BKMP}, and we will not repeat the calculation here. What we need to do however is to analytically continue this result to the orbifold point, to obtain the first term on the right hand side of \eqref{e:annorb}. The analytic continuation can be done directly in Akemann's formula, by expanding the branch points around $\psi=0$. However, it is important to note that as for the disk amplitude, we must write things in terms of the open flat coordinates $X_1$ and $X_2$ --- henceforth we will drop the subscript $orb$ --- before expanding in $\psi$, since the open mirror map is linear in $\psi$. 

After using a few identities involving elliptic functions and $\Gamma$-functions, we obtain the following analytic continuation of the large radius annulus amplitude to the orbifold point, in orbifold flat coordinates $X_1$ and $X_2$:
\begin{gather}
W^{(0,2)}=\rd X_1 \rd X_2 \left( \frac{-9\,{\sqrt{3}}\,{\Gamma(\frac{2}{3})}^6}{8\,{\pi }^3} - 
  \frac{81\,\psi\,{\Gamma(\frac{2}{3})}^{12}}{64\,{\pi }^6} + 
  {\psi}^2\,\left( \frac{1}{18} - \frac{243\,{\sqrt{3}}\,{\Gamma(\frac{2}{3})}^{18}}{512\,{\pi }^9} \right) \right. \notag\\  + 
  \left( 1 + \frac{9\,{\sqrt{3}}\,\psi\,{\Gamma(\frac{2}{3})}^6}{8\,{\pi }^3} + 
     \frac{81\,{\psi}^2\,{\Gamma(\frac{2}{3})}^{12}}{64\,{\pi }^6} \right) \,X_1 + 
  \left( -2\,\psi - \frac{9\,{\sqrt{3}}\,{\psi}^2\,{\Gamma(\frac{2}{3})}^6}{8\,{\pi }^3} \right) \,
   {X_1}^2 + \notag\\ \left( 1 + \frac{9\,{\sqrt{3}}\,\psi\,{\Gamma(\frac{2}{3})}^6}{8\,{\pi }^3} + 
     \frac{81\,{\psi}^2\,{\Gamma(\frac{2}{3})}^{12}}{64\,{\pi }^6} + 
     \left( -3\,\psi - \frac{9\,{\sqrt{3}}\,{\psi}^2\,{\Gamma(\frac{2}{3})}^6}{8\,{\pi }^3} \right) \,
      X_1 + 5\,{\psi}^2\,{X_1}^2 \right) \,X_2 \notag\\ \left. + 
  \left( -2\,\psi - \frac{9\,{\sqrt{3}}\,{\psi}^2\,{\Gamma(\frac{2}{3})}^6}{8\,{\pi }^3} + 
     5\,{\psi}^2\,X_1 + 3\,{X_1}^2 \right) \,{X_2}^2 +\ldots \right).
\label{e:anal}
\end{gather}
One can see that it is not rational, as expected; the non-rational terms should be cancelled by the shift in \eqref{e:annorb}.

The next step is to compute the modular transformation between the large radius periods $(T,T_D)$ and the orbifold periods $(T_{orb}, T_{orb,D})$. This can be done by standard analytic continuation, as in \cite{ABK}. Define
\be
c_1 = -{1 \over 2 \pi i} {\Gamma(1/3) \over \Gamma(2/3)^2},\qquad c_2 = {1 \over 2 \pi i} {\Gamma(2/3) \over \Gamma(1/3)^2}, \qquad \beta = {1 \over (2 \pi i)^3}, \qquad \omega = \re^{2 \pi i/3}.
\ee
We get the transformation
\be
\begin{pmatrix} T_D \\ T \\ 1 \end{pmatrix} = \begin{pmatrix} {\beta \omega^2 \over c_1} & {\beta \omega \over c_2} & {1 \over 3} \\ - c_2 & c_1 & 0 \\ 0 & 0 & 1 \end{pmatrix} \begin{pmatrix} T_{orb,D} \\ T_{orb} \\ 1 \end{pmatrix}.
\label{symptrans}
\ee
Note that this transformation is not quite symplectic, since its determinant is $- \beta$; that is, it changes the scale of the symplectic form. However, this can be taken into account by renormalizing the string coupling constant, as in \cite{ABK}.

Now the modular transformation that we want is given by the inverse of this matrix. We get that
\be
C = -\frac{c_2}{\beta}, \qquad D = - \frac{\omega^2}{c_1}.
\ee
We also need the large radius period matrix $\tau$, analytically continued around $\psi=0$. By definition, it is given by
\be
\tau (\psi)= \frac{\partial T_D}{\partial T} = \frac{ \partial T_D / \partial \psi}{\partial T / \partial \psi}.
\ee
Using the transformation above between $T_D, T$ and $T_{orb,D}, T_{orb}$, and expanding around $\psi=0$, we get
\be
\tau (\psi) = - \frac{{\left( -1 \right) }^{\frac{1}{6}}}{{\sqrt{3}}}  - 
  \frac{i \,2^{\frac{1}{3}}\,\psi\,{\Gamma(\frac{2}{3})}^2}{{\Gamma(\frac{1}{6})}^2} - 
  \frac{i \,{\psi}^2\,{\Gamma(\frac{2}{3})}^7}{2 \pi \,{\Gamma(\frac{1}{3})}^5} + \CO(\psi^3).
\ee
Finally, we can compute the holomorphic differential $u(p)$ from the standard formula \eqref{e:holdiff}.

Putting all this together, and integrating, we obtain the orbifold annulus amplitude in flat orbifold coordinates $X_1,X_2,T_{orb}$
\be\label{e:annres}
F_{orb}^{(0,2)} = \sum_{i_1,i_2,j} \frac{1}{j!} N_{(i_1,i_2),j}^{(0,2)} X_1^{i_1} X_2^{i_2} T_{orb}^j,
\ee
with the invariants $N_{(i_1,i_2),j}^{(0,2)}$ given in table \ref{t:ann}; the invariants are symmetric in $(i_1,i_2)$.

\begin{table}
\begin{center}
$\begin{array}{|c|ccccccccc|}
\hline
&&&&&(i_1,i_2)&&&&\\
\hline
 j & \text{(1,1)} & \text{(2,1)} & \text{(3,1)} & \text{(2,2)} & \text{(4,1)} & \text{(3,2)} & \text{(5,1)} & \text{(4,2)} & \text{(3,3)} \\
 0 & 0 & \frac{1}{2} & 0 & 0 & 0 & 0 & \frac{3}{5} & \frac{1}{2} & \frac{1}{3} \\
 1 & 0 & 0 & -\frac{2}{3} & -\frac{3}{4} & 0 & 0 & 0 & 0 & 0 \\
 2 & \frac{1}{9} & 0 & 0 & 0 & \frac{14}{9} & \frac{5}{3} & 0 & 0 & 0 \\
 3 & 0 & -\frac{1}{6} & 0 & 0 & 0 & 0 & -\frac{26}{5} & -\frac{16}{3} & -\frac{16}{3} \\
 4 & 0 & 0 & \frac{34}{81} & \frac{11}{36} & 0 & 0 & 0 & 0 & 0 \\
 5 & -\frac{1}{243} & 0 & 0 & 0 & -\frac{338}{243} & -\frac{65}{81} & 0 & 0 & 0 \\
 6 & 0 & -\frac{1}{54} & 0 & 0 & 0 & 0 & \frac{238}{45} & \frac{56}{27} & \frac{40}{27} \\
 7 & 0 & 0 & \frac{562}{2187} & \frac{197}{972} & 0 & 0 & 0 & 0 & 0 \\
 8 & \frac{391}{6561} & 0 & 0 & 0 & -\frac{17206}{6561} & -\frac{4261}{2187} & 0 & 0 & 0 \\
 9 & 0 & -\frac{29}{162} & 0 & 0 & 0 & 0 & \frac{3614}{135} & \frac{1552}{81} & \frac{160}{9} \\
 10 & 0 & 0 & \frac{31606}{59049} & \frac{8333}{26244} & 0 & 0 & 0 & 0 & 0 \\
 11 & -\frac{225595}{177147} & 0 & 0 & 0 & \frac{30802}{177147} & \frac{158125}{59049} & 0 & 0 & 0 \\
 12 & 0 & \frac{8455}{1458} & 0 & 0 & 0 & 0 & -\frac{44338}{1215} & -\frac{48104}{729} & -\frac{52712}{729} \\
 13 & 0 & 0 & -\frac{49954466}{1594323} & -\frac{15072793}{708588} & 0 & 0 & 0 & 0 & 0 \\
 14 & \frac{301065409}{4782969} & 0 & 0 & 0 & \frac{712334462}{4782969} & \frac{8347925}{1594323} & 0 & 0 & 0\\
\hline
\end{array}$
\caption{Some invariants $N_{(i_1,i_2), j}^{(0,2)}$ for the orbifold annulus amplitude of $\IC^3 / \IZ_3$.}\label{t:ann}
\end{center}
\end{table}

The invariants are rational, as they should. Moreover, it is easy to see that the amplitude is invariant under the $\IZ_3$ orbifold monodromy. Indeed, the orbifold monodromy is given by
\be
(T_{orb},X_1,X_2) \mapsto (\omega T_{orb},\omega^2 X_1, \omega^2 X_2), \qquad \omega = \re^{2 \pi i/3}.
\ee
Thus all terms in the expansion above are monodromy invariant. 

In table \ref{t:fann} we also present some results for the corresponding framed invariants. 

\begin{table}
\begin{center}
\begin{footnotesize}
$\begin{array}{|c|cccc|}
\hline
&&&(i_1,i_2)&\\
\hline
 j & \text{(1,1)} & \text{(2,1)} & \text{(3,1)} & \text{(2,2)} \\
 0 & 0 & \frac{1}{2} +f& 0 & 0  \\
 1 & 0 & 0 & -\frac{2}{3} +f(2f-1)& -\frac{3}{4}+f(2f-1) \\
 2 & \frac{1}{9} +f(f+1) & 0 & 0 & 0  \\
 3 & 0 &
 -\frac{1}{6}-\frac{1}{3} f \left(12 f^2+18 f+7\right)& 0 & 0  \\
 4 & 0 & 0 & \frac{34}{81}- {f( 2646 f^3+2592 f^2+538 f-53) \over 27}& \frac{11}{36} - {f(2727 f^3+2754 f^2+637 f-35)\over 27} \\
 5 & -\frac{1}{243} +{5 f(1+f)\over 27}& 0 & 0 & 0  \\
 6 & 0 & -\frac{1}{54} -{f(31+90 f+ 60 f^2) \over 27}& 0 & 0 \\
\hline
\end{array}$
\end{footnotesize}
\caption{Some invariants $N_{(i_1,i_2), j}^{(0,2)}$ for the framed orbifold annulus amplitude of $\IC^3 / \IZ_3$.}\label{t:fann}
\end{center}
\end{table}

\subsubsection{Higher amplitudes}

Computing the higher amplitudes directly using the modular shift is rather complicated, partially because of all the elliptic functions involved in the calculation. It is much simpler to use the functional expressions to compute the orbifold amplitudes. We have however checked that the genus 0, three-hole amplitude computed through the shift also matches the functional calculation, but we will not present the calculation here for brevity.

\subsection{Calculation using the functionals}

Let us now use the functional expressions for the amplitude derived in the previous section to compute the open orbifold amplitudes. First, we need to specify what the functional $G[T;z]$ is for the curve $\Sigma_z$ given by \eqref{e:sigmaz}.

\subsubsection{Generalities}

First, from the embedding of the elliptic curve \eqref{e:sigmaz}, we obtain
\be
\sigma(x)= (x + 1)^2 - 4 x^3 z,
\ee
and the discriminant
\be\label{e:disc}
\Delta = 1 + 27 z.
\ee
We claim that the functional $G[T;z]$ reads
\be\label{e:functG}
G[T;z] = - \frac{1}{z^2 C_{z z z} } \frac{\partial}{\partial z} \left(4 \log \frac{\partial T}{\partial z} +  \frac{1}{3} \log \Delta + 5 \log z \right),
\ee
where 
\be
C_{z z z} = \frac{\partial^3 F^{(0)}}{\partial z^3} = \frac{3}{z^3 \Delta}
\ee
is the Yukawa coupling in the local variable $z$. Let us sketch the derivation of this functional formula.

The genus one amplitude $F^{(1)}$ was defined in definition \ref{def:f1}. In our context, one can show that \eqref{e:deff1} becomes, up to a constant term,
\be
F^{(1)} = -\log \eta(\tau)-\frac{1}{24}\log \tilde \Delta(\psi),
\ee
where $\tilde \Delta(\psi) = 27 + \psi^3$ is the discriminant in terms of $\psi = z^{-1/3}$, and $\eta(\tau)$ is the Dedekind $\eta$-function. Alternatively, $F^{(1)}$ can also be expressed as \cite{BCOV2}
\be
F^{(1)} = -\frac{1}{2}\log\frac{\partial T}{\partial  \psi} 
-\frac{1}{12}\log \tilde \Delta(\psi).
\ee
Combining the two formulae, we obtain
\be
\log \eta(\tau)  =\frac{1}{2}\log \frac{\partial T}{\partial  \psi} +\frac{1}{24}\log \tilde \Delta(\psi) .
\ee
Now the second Eisenstein series $E_2(\tau)$ is related to the Dedekind $\eta$-function by:
\be
E_2(\tau) =24 \frac{d}{d\tau} \log \eta(\tau).
\ee
As a result, we get
\be
 E_2(\tau)=\frac{\partial}{\partial \tau} \left(12 \log \frac{\partial T}{\partial \psi} + \log \tilde\Delta(\psi) \right).
\ee
Using the fact that
\be
\tau = \frac{\partial^2 F^{(0)}}{\partial T^2}, \qquad \tilde \Delta(z^{-1/3}) = \frac{\Delta}{z},
\ee
we obtain
\be
E_2(\tau) = \left( \frac{\partial T}{\partial z} \right)^2 \frac{1}{C_{z z z}} \frac{\partial}{\partial z} \left(12 \log \frac{\partial T}{\partial z} + \log \Delta + 15 \log z \right).
\ee
Finally, recall that $G[T;z]$ is defined by
\be
G[T;z] = \frac{E_2(\tau)}{3 \omega_1^2}.
\ee
By direct computation, we can write $\omega_1$ as a functional of $T$ and $z$,
\be
\omega_1 = \ri z \frac{\partial T}{\partial z},
\ee 
and we obtain the final formula for $G[T;z]$ given in \eqref{e:functG}.

With this explicit formula for the functional $G[T;z]$, we can proceed with the calculation of the higher amplitudes, using our ansatz \eqref{general}. As explained previously, to compute the amplitudes at the orbifold point, all that we need to do is to input the period $T_{orb}$ corresponding to the flat parameter at the orbifold point in the functional.

\subsubsection{Annulus amplitude}

The functional expression for the annulus amplitude was obtained in \eqref{e:annfunct}, using the expression \eqref{e:functG} for $G[T;z]$. For the curve $\Sigma_z$ under consideration, the rational function $f_0^{(0,2)}(x_1,x_2)$ can be computed, and reads
\be
f_0^{(0,2)}(x_1, x_2) = \frac{6 + 6 x_2 + x_2^2 + x_1^2 (1 - 12 x_2 z) + x_1 (6 + 4 x_2 - 12 x_2^2 z)}{3 (x_1 - x_2)^2}.
\ee

All that one needs to do to obtain the orbifold annulus amplitude, is to do the change of variable $z = \psi^{-3}$, replace the open moduli $x_1$ and $x_2$ by the open orbifold mirror map $x_{1,2}= X_{1,2} \psi$, where $X_1$ and $X_2$ are the open flat coordinates, and insert the closed flat orbifold coordinate $T= T_{orb}(\psi)$ in the functional. Then, we plug in the closed mirror map in the result and expand in $X_1$, $X_2$ and $T_{orb}$ to obtain the orbifold annulus amplitude. It is easy to show that we obtain precisely \eqref{e:annres} with the invariants of table \ref{t:ann}; note however how much simpler the calculation was.

\subsubsection{Genus $1$, one-hole}

The amplitude has the form predicted by the ansatz \eqref{general}. By comparing with the result obtained through the recursion, we can fix the functions $f_i^{(1,1)}(x)$. We obtain
\be\label{e:11p2}
W^{(1,1)} = \frac{\rd x}{\sqrt{\sigma(x)} \Delta} \left(\frac{9}{32 } G^2[T;z] + f_1^{(1,1)}(x) G[T;z] + f_0^{(1,1)}(x)\right),
\ee
with the functions:
\begin{align}
f_1^{(1,1)}(x)=&\frac{x(1+x)\Delta}{8 \sigma(x)}, \notag\\
f_0^{(1,1)}(x)=&\frac{1}{96 \sigma(x)^2}(1 + 36 z + 4 x (1 + 36 z) + 16 x^6 z^2 (1 + 36 z) + 
  6 x^2 (1 + 46 z + 270 z^2) \notag\\
&+ x^4 (1 + 56 z + 396 z^2) + 
  4 x^3 (1 + 55 z + 495 z^2) + 4 x^5 z (1 + 57 z + 1296 z^2)).
\end{align}

Doing the transformations as above to go to the orbifold point, we obtain the amplitude
\be
F_{orb}^{(1,1)} = \sum_{i,j} \frac{1}{j!} N_{i,j}^{(1,1)} X^i T_{orb}^j,
\ee
with the invariants given in table \ref{t:11}.
\begin{table}
\begin{center}
\begin{footnotesize}
$\begin{array}{|c|ccccccccc|}
\hline
&&&&&i&&&&\\
\hline
 j & 1 & 2 & 3 & 4 & 5 & 6 &7&8&9\\
\hline
 0 & 0 & 0 & \frac{5}{24} & 0 & 0 & \frac{11}{8} & 0 & 0 & \frac{85}{12} \\
 1 & \frac{1}{72} & 0 & 0 & -\frac{5}{9} & 0 & 0 & -\frac{77}{12} & 0 & 0 \\
 2 & 0 & -\frac{1}{36} & 0 & 0 & \frac{25}{12} & 0 & 0 & \frac{110}{3} & 0 \\
 3 & 0 & 0 & \frac{1}{12} & 0 & 0 & -10 & 0 & 0 & -\frac{495}{2} \\
 4 & \frac{1}{1944} & 0 & 0 & -\frac{86}{243} & 0 & 0 & \frac{18823}{324} & 0 & 0 \\
 5 & 0 & -\frac{11}{972} & 0 & 0 & \frac{3301}{1620} & 0 & 0 & -\frac{127415}{324} & 0 \\
 6 & 0 & 0 & \frac{31}{324} & 0 & 0 & -\frac{412}{27} & 0 & 0 & \frac{162755}{54} \\
 7 & \frac{475}{52488} & 0 & 0 & -\frac{5210}{6561} & 0 & 0 & \frac{1237285}{8748} & 0 & 0 \\
 8 & 0 & -\frac{223}{26244} & 0 & 0 & \frac{307847}{43740} & 0 & 0 & -\frac{6757145}{4374} & 0 \\
 9 & 0 & 0 & -\frac{1}{12} & 0 & 0 & -\frac{610}{9} & 0 & 0 & \frac{344095}{18} \\
 10 & -\frac{395585}{1417176} & 0 & 0 & \frac{172678}{177147} & 0 & 0 & \frac{168774025}{236196} & 0 & 0 \\
 11 & 0 & \frac{712639}{708588} & 0 & 0 & -\frac{5242661}{1180980} & 0 & 0 & -\frac{1966276115}{236196} & 0 \\
 12 & 0 & 0 & -\frac{38945}{8748} & 0 & 0 & -\frac{62488}{729} & 0 & 0 & \frac{158337275}{1458} \\
 13 & \frac{640118305}{38263752} & 0 & 0 & \frac{133378114}{4782969} & 0 & 0 & \frac{23152439695}{6377292} & 0 & 0 \\
 14 & 0 & -\frac{1726238977}{19131876} & 0 & 0 & -\frac{11317800859}{31886460} & 0 & 0 & -\frac{152933889775}{1594323} & 0\\
\hline
\end{array}$
\end{footnotesize}
\caption{Some invariants $N_{i, j}^{(1,1)}$ for the genus $1$, $1$ hole orbifold amplitude of $\IC^3 / \IZ_3$.}\label{t:11}
\end{center}
\end{table}

\subsubsection{Genus $0$, three-hole}

The amplitude has again the form predicted by the ansatz \eqref{general}. We can fix the functions $f_i^{(0,3)}(x_1,x_2,x_3)$ by comparing with the recursion, and we obtain
\begin{align}\label{e:30exp}
W^{(0,3)} = \frac{\rd x_1 \rd x_2 \rd x_3}{\sqrt{\sigma(x_1)\sigma(x_2)\sigma(x_3)} \Delta} \Big(&\frac{9}{64} G^3[T;z] + f_2^{(0,3)}(x_1,x_2,x_3) G^2[T;z]\notag\\ &+ f_1^{(0,3)}(x_1,x_2,x_3) G[T;z] +f_0^{(0,3)}(x_1,x_2,x_3) \Big).
\end{align}
The functions $f_i^{(0,3)}(x_1,x_2,x_3)$ are rather complicated; we present them in Appendix B.

Doing the transformations as above to go to the orbifold point, we obtain the amplitude
\be
F_{orb}^{(0,3)} = \sum_{i_1,i_2,i_3,j} \frac{1}{j!} N_{(i_1,i_2,i_3),j}^{(0,3)} X_1^{i_1} X_2^{i_2} X_3^{i_3} T_{orb}^j,
\ee
with the invariants given in table \ref{t:03}. The invariants are symmetric in $(i_1,i_2,i_3)$.

\begin{table}
\begin{center}
$\begin{array}{|c|ccccccc|}
\hline
 &  & && (i_1,i_2,i_3)  &&&\\
\hline
j & (1,1,1) & (2,1,1) & (3,1,1) & (2,2,1) &(4,1,1) & (3,2,1) & (2,2,2) \\
\hline
 0 & \frac{2}{3} & 0 & 0 & 0 & \frac{4}{3} & 1 & \frac{9}{8} \\
 1 & 0 & -\frac{5}{6} & 0 & 0 & 0 & 0 & 0 \\
 2 & 0 & 0 & \frac{52}{27} & \frac{23}{12} & 0 & 0 & 0 \\
 3 & -\frac{1}{27} & 0 & 0 & 0 & -\frac{176}{27} & -\frac{19}{3} & -\frac{49}{8} \\
 4 & 0 & \frac{13}{162} & 0 & 0 & 0 & 0 & 0 \\
 5 & 0 & 0 & -\frac{124}{729} & -\frac{11}{324} & 0 & 0 & 0 \\
 6 & -\frac{1}{81} & 0 & 0 & 0 & -\frac{32}{81} & -\frac{37}{27} & -\frac{133}{72} \\
 7 & 0 & \frac{397}{4374} & 0 & 0 & 0 & 0 & 0 \\
 8 & 0 & 0 & -\frac{17972}{19683} & -\frac{7303}{8748} & 0 & 0 & 0 \\
 9 & \frac{37}{243} & 0 & 0 & 0 & \frac{2480}{243} & \frac{709}{81} & \frac{1771}{216} \\
 10 & 0 & -\frac{92273}{118098} & 0 & 0 & 0 & 0 & 0 \\
 11 & 0 & 0 & \frac{3393164}{531441} & \frac{1584895}{236196} & 0 & 0 & 0 \\
 12 & -\frac{42703}{6561} & 0 & 0 & 0 & -\frac{475424}{6561} & -\frac{57857}{729} & -\frac{172613}{1944} \\
 13 & 0 & \frac{120276571}{3188646} & 0 & 0 & 0 & 0 & 0 \\
 14 & 0 & 0 & -\frac{4470350924}{14348907} & -\frac{1939962841}{6377292} & 0 & 0 & 0\\
\hline
\end{array}$
\caption{Some invariants $N_{(i_1,i_2,i_3), j}^{(0,3)}$ for the genus $0$, $3$ hole orbifold amplitude of $\IC^3 / \IZ_3$.}\label{t:03}
\end{center}
\end{table}

\subsection{Conifold point}\label{s:con}

So far we considered the A- and B-model amplitudes in the two distinct phases of $\CM$, namely the large radius phase and the orbifold  one.
There is however a third point around which the amplitudes have an interesting expansion, which is the \emph{conifold point}. This is not a limit point of a phase of $\CM$; rather, it is a singular point of the moduli space, where the target space of the A-model develops a conifold singularity.\footnote{In the gauged linear sigma model description of the A-model, at the conifold point new massless modes appear, which defines a new branch of vacua.}
This point is located at $z = - \frac{1}{27}$.

It is generally interesting to expand the amplitudes near the conifold point. For instance, the leading behavior of the closed amplitudes $F^{(g)}_{con}$ expanded at the conifold point can be understood as the amplitudes of non-critical $c=1$ string at the self-dual radius \cite{GVc1}. Moreover, the amplitudes $F^{(g)}_{con}$ seem to possess a universal gap, as discovered in \cite{HK}.
That is, the leading behavior of the closed amplitudes is of the form:
\bea 
F_{con}^{(g)}={B_{2g}\over 2 g ( 2g -2) T_{con}^{2g-2}}+ k^{(g)}_1  T_{con}+
{\cal O}(T_{con}^2)\ ,
\label{gap}
\eea
where the $B_{n}$ are the Bernoulli numbers and $T_{con}$
is the vanishing period at the conifold. 
This feature is rather striking, and very useful computationally. Indeed, one of the most effective approach for computing closed amplitudes is by directly integrating \cite{GKMW}  the holomorphic anomaly equation of \cite{BCOV}, using the polynomial structure of the amplitudes proposed in \cite{YY}. However, the holomorphic anomaly equation is not complete; at each genus one needs to fix a finite number of constants (the holomorphic ambiguity) using extra data. In conjunction with the leading behavior of the amplitudes, the gap behavior at the conifold point --- more precisely the absence 
of the $2g-3$ subleading negative powers 
in the $T_{con}$ expansion --- imposes
$2g-2$ such extra conditions, which have been shown to completely fix the holomorphic ambiguity in many local geometries. In the compact setting, they allow computation of closed amplitudes to very high genus \cite{HKQ}.

One may wonder if this approach  has an open counterpart.
So far, we relied entirely  on the recursion formalism to 
compute open amplitudes. As we have seen, while this formalism is very satisfactory conceptually, it is rather cumbersome computationally. Direct integration of the
open  holomorphic anomaly equations --- recently derived in \cite{EMO} in the local setting --- would provide an alternative method to compute the open amplitudes.
In particular, one could hope that a gap behavior exists for the open amplitudes expanded at the conifold point, providing sufficient boundary conditions  to fix the holomorphic ambiguity. 

This is surely enough motivation for studying in more detail the open amplitudes near the conifold point. In what follows we present general properties of the amplitudes; technical and computational aspects are relegated to Appendix C.

Consider as usual the moduli space $\Sigma \to \CM$. As mentioned before, the conifold point in the closed moduli space $\CM$ is located at $z=-1/27$; a good local coordinate is
\be
w = 27 z + 1.
\ee
Since we are computing open amplitudes, we must also specify where we expand the amplitudes in the open moduli space $\Sigma_w$.
At the conifold point $w=0$, it turns out that two of the branch points of the $x$-projection of the curve $\Sigma_w$ collapse to the same value, $x=-1/3$. Instead of expanding the open amplitudes near $x=0$, we will now expand the amplitudes near this critical point $x=-1/3$, using a new local coordinate centered at this point:\footnote{Note however that this critical point is a singular limit of the curve $\Sigma_w$, hence one has to choose an appropriate set of coordinates to smooth out the singularity. In particular, one must consider a double--scaling limit,
where the open coordinate $p$ is rescaled with the closed modulus, as explained in \cite{EO}. We will come back to that in the explicit computations in Appendix C.\label{f:singular}}
\be
p = \frac{1}{x}+ 3.
\ee
On the mirror A-model side, expanding the amplitudes near this point should correspond to considering branes located near a vertex of the toric diagram.

The open B-model amplitudes expanded near this critical point should correspond, at leading order, to $c=1$ string amplitudes at the self-dual radius,\footnote{Note that such critical points have already been considered in the context of matrix models. In \cite{dgkv} it was proposed  that $c=1$ amplitudes can be obtained in a two-cut matrix model by considering the critical limit where the two cuts touch each other.}
which are in turn   equivalent  to Gaussian matrix model amplitudes \cite{dv02}.
More precisely, the expected leading behavior of the open amplitudes near the critical point is \cite{bh}:
\be\label{e:leading}
F^{(g,h)} \sim T_{con}^{2-2g-h} \tilde F^{(g,h)},
\ee
where $T_{con}$ is the closed flat coordinate near the conifold point $w=0$, which corresponds to the vanishing period. The amplitudes $\tilde F^{(g,h)}$, which are independent of $T_{con}$, are to be identified with the amplitudes for FZZT branes in the $c=1$ string at the self-dual radius. 
Indeed, as noticed  in a similar context in \cite{MM}, in the critical limit
 toric branes should become FZZT branes.

The leading behavior (\ref{e:leading}) is the open analog to the leading behavior for the closed amplitudes, as presented in \eqref{gap}. Recalling the discussion above for the closed amplitudes, the open amplitudes would possess a gap if the subleading terms in $T_{con}$ with negative exponents vanished.

We can use the recursion and the formalism developed in section 2 to compute the B-model amplitudes explicitly at the critical point; we report this calculation in Appendix C. The calculation shows that the amplitudes indeed possess the expected  leading behavior in $T_{con}$. However, the subleading terms in $T_{con}$ are not vanishing, in contrast with the closed amplitudes. As a result, we conclude that in the open case, there is no simple gap behavior at the conifold. This renders the use of the direct integration of the holomorphic anomaly equations as a method to solve for the amplitudes rather limited in the open case, since one lacks the boundary conditions provided by the gap behavior and required to fix the holomorphic ambiguity.

A word of caution to end this section; as we discuss in Appendix C, it is not clear to us how to fix the open flat coordinate near the critical point $(w,p)=(0,0)$. This prevents us from providing unambiguous results for the open amplitudes near the conifold point.
 It would be interesting to clarify these issues further.

\appendix

\section{Landau-Ginzburg vs sigma model}

In this Appendix we explain the relation between the standard Landau-Ginzburg mirrors to toric threefolds and the sigma models described previously. We follow the argument presented by Hori, Iqbal and Vafa in p.93 of \cite{HIV}. 

Consider the A-twisted sigma model on a (noncompact) toric Calabi-Yau threefold $X$ defined by the toric charge vectors $Q^a$, $a=1,\ldots,k$. Its mirror \cite{Gi,HV} --- see also \cite{CCIT} for a clear explanation --- is a B-twisted Landau-Ginzburg model on the family of algebraic tori $\pi: V \to \CM$, with $V = (\IC^*)^{3+k}$, and $\CM = (\IC^*)^k$, with projection map
\be
\pi: (y_0, \ldots, y_{k+2}) \mapsto (z_1, \ldots, z_k) = \left( \prod_{i=0}^{k+2} y_i^{Q^1_i}, \ldots, \prod_{i=0}^{k+2} y_i^{Q^k_i} \right).
\ee
The Landau-Ginzburg superpotential $W : V \to \IC$ reads
\be
W = \sum_{i=0}^{k+2} y_i.
\ee
Choose local coordinates $y_0,y_1,y_2$ on the fiber $V_z = \pi^{-1}(z_1, \ldots, z_k)$, and use the projection map $\pi$ to rewrite the superpotential as
\be
W_z := W \big|_{V_{z}}  = G(y_0,y_1,y_2;z),
\ee
where $G(y_0,y_1,y_2;z)$ is a homogeneous Laurent polynomial in $(y_0,y_1,y_2)$ of degree 1.

Consider now the Landau-Ginzburg model on $V'_z = (\IC^*)^{3} \times (\IC^2)$, with superpotential
\be
W'_z = G(y_0,y_1,y_2;z) - w w'.
\ee
By Kn\"orrer periodicity, the category of B-branes in the Landau-Ginzburg model $(V_z,W_z)$ is equivalent to the category of B-branes in the Landau-Ginzburg model $(V'_z,W'_z)$ \cite{Or}. The ``periods'' of the Landau-Ginzburg model consists in integrals of the form
\be
\int \re^{G(y_0,y_1,y_2;z) - w w'} \dd \log y_0 \wedge \dd \log y_1 \wedge \dd \log y_2 \wedge \dd w \wedge \dd w'.
\ee

Since $y_0$ is a $\IC^*$-coordinate, we can define new coordinates $\tilde y_i = y_i / y_0$, $i=1,2$, and $\tilde w = w / y_0$. The superpotential becomes
\be
W'_z = y_0 (G(1,\tilde y_1, \tilde y_2; z) - \tilde w w'),
\ee
and the periods now take the form
\be
\int \re^{y_0 (G(1,\tilde y_1, \tilde y_2; z) - \tilde w w')} \dd y_0 \wedge \dd \log \tilde y_1 \wedge \dd \log \tilde y_2 \wedge \dd \tilde w \wedge \dd w'.
\ee
Note that $\dd \log y_0$ has become $\dd y_0$, due to the rescaling of $w$. As a result, we can ``integrate out'' $y_0$, and we obtain a delta function
\be
\delta(G(1,\tilde y_1, \tilde y_2; z) - \tilde w w').
\ee
In other words, the B-twisted Landau-Ginzburg model ``localizes'' on the B-twisted sigma model on the family of noncompact threefolds $Y \to \CM$ with fiber
\be
Y_z = \{ w w' = H(x,y;z) \} \subset (\IC^*)^2 \times (\IC^2),
\ee
where we redefined $x=\tilde y_1$, $y = \tilde y_2$, $w = \tilde w$ and $H(x,y;z) := G(1,x, y; z)$.

Note however that we have only shown equivalence of the period integrals. What one would need to show is the equivalence of the category of B-branes for both models, as in the first step involving Kn\"orrer periodicity. In other words, there should be an equivalence between the category of B-branes of the Landau-Ginzburg model $(V'_z, W'_z)$, which is generally understood as the category of matrix factorizations, and the derived category of coherent sheaves of $Y_z$. It would be very interesting to understand this relation better, in the spirit of the Landau-Ginzburg/Calabi-Yau correspondence which was derived in \cite{Or2}.

\section{Functions $f_i^{(0,3)}(x_1,x_2,x_3)$ for the genus $0$, $3$ hole amplitude}

We present in this Appendix the functions $f_i^{(0,3)}(x_1,x_2,x_3)$, $i=0,1,2$ entering into the expression for the amplitude $W^{(0,3)}$ \eqref{e:30exp}:

\begin{footnotesize}
\begin{align}
f_2^{(0,3)} =& \frac{1}{64\sigma(x_1)\sigma(x_2)\sigma(x_3)}(4 z (4 z (12 z x_3^3+(108 z+1) x_3^2+2 (54 z-1) x_3-3) x_2^3
+(4 z (108 z+1) x_3^3 \notag \\&-(216 z+5) x_3^2-6
   (54 z+1) x_3-108 z-1) x_2^2+2 (4 z (54 z-1) x_3^3-3 (54 z+1) x_3^2\notag\\&-2 (108 z+1) x_3-54 z+1) x_2 -12 z x_3^3
-(108 z+1)
   x_3^2+(2-108 z) x_3+3) x_1^3\notag\\&+(4 z (4 z (108 z+1) x_3^3 -(216 z+5) x_3^2-6 (54 z+1) x_3-108 z-1) x_2^3
\notag\\&+(-4 z (216
   z+5) x_3^3+9 (36 z+1) x_3^2+2 (270 z+7) x_3+216 z+5) x_2^2+(-24 z (54 z+1) x_3^3 \notag\\&+2 (270 z+7) x_3^2+4 (216 z+5) x_3+324
   z+6) x_2-4 z (108 z+1) x_3^3+(216 z+5) x_3^2
+108 z \notag\\&+6 (54 z+1) x_3+1) x_1^2+2 (4 z (4 z (54 z-1) x_3^3-3 (54 z+1)
   x_3^2-2 (108 z+1) x_3-54 z+1) x_2^3 \notag\\&+(-12 z (54 z+1) x_3^3+(270 z+7) x_3^2
+2 (216 z+5) x_3+162 z+3) x_2^2+(-8 z (108
   z+1) x_3^3\notag\\&+2 (216 z+5) x_3^2+12 (54 z+1) x_3+216 z+2) x_2+4 (1-54 z) z x_3^3+3 (54 z+1) x_3^2+54 z
\notag\\&+(216 z+2) x_3-1) x_1 +12 z
   x_3^3+108 z x_3^2+x_3^2+108 z x_3-2 x_3\notag\\&-4 z x_2^3 (12 z x_3^3+(108 z+1) x_3^2+2 (54 z-1) x_3-3)+x_2 (8 (1-54 z) z x_3^3+6
   (54 z+1) x_3^2\notag\\&+(432 z+4) x_3+108 z-2)+x_2^2 (-4 z (108 z+1) x_3^3+(216 z+5) x_3^2\notag\\&+6 (54 z+1) x_3+108 z+1)-3 ),
\end{align}
\begin{align}
f_1^{(0,3)}=&\frac{1}{192\sigma(x_1)\sigma(x_2)\sigma(x_3)} (-4 z (4 z (12 z (180 z+7) x_3^3+(2592 z^2-12 z-5) x_3^2-2 x_3+108 z+3) x_2^3\notag \\&+(4 z
   (2592 z^2-12 z-5) x_3^3+(1296 z^2+48 z+1) x_3^2+6 (648 z^2-12 z-1) x_3+1296 z^2-168 z-7) x_2^2\notag \\&-2
   (4 z x_3^3+(-1944 z^2+36 z+3) x_3^2+(-1944 z^2+360 z+14) x_3+324 z+11) x_2\notag \\&+12 z (36 z+1) x_3^3+(1296
   z^2-168 z-7) x_3^2-3 (144 z+5)-2 (324 z+11) x_3) x_1^3\notag \\&+(-4 z (4 z (2592 z^2-12 z-5) x_3^3+(1296
   z^2+48 z+1) x_3^2\notag \\&+6 (648 z^2-12 z-1) x_3+1296 z^2-168 z-7) x_2^3+(-4 z (1296 z^2+48 z+1) x_3^3\notag \\&+2
   (10368 z^2+492 z+5) x_3+9072 z^2+9 (36 z x_3+x_3){}^2+336 z+1) x_2^2\notag \\&+2 (12 z (-648 z^2+12 z+1)
   x_3^3+(10368 z^2+492 z+5) x_3^2+2 (7452 z^2+276 z+1) x_3\notag \\&+5832 z^2+108 z-3) x_2+4 z (-1296 z^2+168
   z+7) x_3^3+5184 z^2+(9072 z^2+336 z+1) x_3^2\notag \\&-24 z+6 (1944 z^2+36 z-1) x_3-7) x_1^2+2 (4 z (4 z
   x_3^3+(-1944 z^2+36 z+3) x_3^2\notag \\&+(-1944 z^2+360 z+14) x_3+324 z+11) x_2^3+(12 z (-648 z^2+12 z+1)
   x_3^3\notag \\&+(10368 z^2+492 z+5) x_3^2+2 (7452 z^2+276 z+1) x_3+5832 z^2+108 z-3) x_2^2\notag \\&-2 (4 z (972 z^2-180
   z-7) x_3^3-(7452 z^2+276 z+1) x_3^2+(6-5832 z^2) x_3-972 z^2+180 z+7) x_2\notag \\&+4 z (324 z+11) x_3^3+3
   (1944 z^2+36 z-1) x_3^2-324 z+2 (972 z^2-180 z-7) x_3-11) x_1+1728 z^2 x_3^3\notag \\&+60 z x_3^3+5184 z^2 x_3^2-24 z
   x_3^2-7 x_3^2-432 z-648 z x_3-22 x_3\notag \\&+4 z x_2^3 (-12 z (36 z+1) x_3^3+(-1296 z^2+168 z+7) x_3^2+(648 z+22) x_3+432
   z+15)\notag \\&+2 x_2 (4 z (324 z+11) x_3^3+3 (1944 z^2+36 z-1) x_3^2+2 (972 z^2-180 z-7) x_3-324 z-11)\notag \\&+x_2^2
   (4 z (-1296 z^2+168 z+7) x_3^3+(9072 z^2+336 z+1) x_3^2+6 (1944 z^2+36 z-1) x_3\notag \\&+5184 z^2-24
   z-7)-15),
\end{align}
\begin{align}
f_0^{(0,3)}=&\frac{1}{576\sigma(x_1)\sigma(x_2)\sigma(x_3)}(4 z (4 z (36 z (144 z^2+32 z+1) x_3^3+(7776 z^2+480 z+7) x_3^2\notag \\&+2 (4536 z^2+306
   z+5) x_3+3 (864 z^2+60 z+1)) x_2^3+(4 z (7776 z^2+480 z+7) x_3^3+(108864 z^3\notag \\&+7920 z^2+168
   z+1) x_3^2+6 (7776 z^3+1368 z^2+66 z+1) x_3+3888 z^2+276 z+5) x_2^2\notag \\&+2 (4 z (4536 z^2+306 z+5)
   x_3^3+3 (7776 z^3+1368 z^2+66 z+1) x_3^2\notag \\&+2 (4212 z^2+288 z+5) x_3+5184 z^2+378 z+7) x_2+12 z (864 z^2+60
   z+1) x_3^3+9 (24 z+1)^2\notag \\&+(3888 z^2+276 z+5) x_3^2+2 (5184 z^2+378 z+7) x_3) x_1^3\notag \\&+(4 z (4 z
   (7776 z^2+480 z+7) x_3^3+(108864 z^3+7920 z^2+168 z+1) x_3^2\notag \\&+6 (7776 z^3+1368 z^2+66 z+1) x_3+3888 z^2+276
   z+5) x_2^3+(124416 z^3+8496 z^2\notag \\&+4 (108864 z^3+7920 z^2+168 z+1) x_3^3 z+120 z+3 (186624 z^4+53568 z^3+3888
   z^2+108 z+1) x_3^2\notag \\&+(264384 z^3+20160 z^2+444 z+2) x_3-1) x_2^2+2 (46656 z^3+2376 z^2\notag \\&+12 (7776 z^3+1368
   z^2+66 z+1) x_3^3 z-54 z+(132192 z^3+10080 z^2+222 z+1) x_3^2\notag \\&+2 (81648 z^3+5292 z^2+60 z-1) x_3-3) x_2+4
   z (3888 z^2+276 z+5) x_3^3\notag \\&+(124416 z^3+8496 z^2+120 z-1) x_3^2-132 z+6 (15552 z^3+792 z^2-18 z-1)
   x_3-5) x_1^2\notag \\&+2 (4 z (4 z (4536 z^2+306 z+5) x_3^3+3 (7776 z^3+1368 z^2+66 z+1) x_3^2\notag \\&+2 (4212
   z^2+288 z+5) x_3+5184 z^2+378 z+7) x_2^3+(46656 z^3+2376 z^2\notag \\&+12 (7776 z^3+1368 z^2+66 z+1) x_3^3 z-54
   z+(132192 z^3+10080 z^2+222 z+1) x_3^2\notag \\&+2 (81648 z^3+5292 z^2+60 z-1) x_3-3) x_2^2+2 (4 z (4212
   z^2+288 z+5) x_3^3\notag \\&+(81648 z^3+5292 z^2+60 z-1) x_3^2+6 (5832 z^3+108 z^2-30 z-1) x_3-324 z^2-144 z-5)
   x_2\notag \\&+4 z (5184 z^2+378 z+7) x_3^3+2592 z^2+3 (15552 z^3+792 z^2-18 z-1) x_3^2-90 z\notag \\&-2 (324 z^2+144 z+5)
   x_3-7) x_1+20736 z^3 x_3^3+1728 z^2 x_3^3+36 z x_3^3+10368 z^2-132 z x_3^2\notag \\&-5 x_3^2+144 z+5184 z^2 x_3-180 z x_3-14 x_3+4 z x_2^3
   (12 z (864 z^2+60 z+1) x_3^3\notag \\&+(3888 z^2+276 z+5) x_3^2+2 (5184 z^2+378 z+7) x_3+9 (24
   z+1)^2)\notag \\&+x_2^2 (4 z (3888 z^2+276 z+5) x_3^3+(124416 z^3+8496 z^2+120 z-1) x_3^2\notag \\&+6 (15552 z^3+792
   z^2-18 z-1) x_3-132 z-5)+2 x_2 (4 z (5184 z^2+378 z+7) x_3^3\notag \\&+3 (15552 z^3+792 z^2-18 z-1) x_3^2-2
   (324 z^2+144 z+5) x_3+2592 z^2-90 z-7)-9).
\end{align}
\end{footnotesize}

\section{Conifold expansion}

In this Appendix we provide detailed calculations supporting the discussion of the open amplitudes near the conifold point in subsection \ref{s:con}.

Before computing the amplitudes, one needs to fix the open and closed mirror maps near the critical point $(w,p)=(0,0)$ in the moduli space. First, the closed mirror map can be easily obtained  by performing 
analytic continuation of the large radius periods to the conifold point $w=0$. One obtains
\bea
\label{cmm}
\nonumber
T_{con}&=&w+\frac{11 w^2}{18}+\frac{109 w^3}{243}+
\frac{9389 w^4}{26244}+\ldots ,
 \\
T^D_{con}&=&w \log (w) +\left(\frac{11 \log (w)}{18}+\frac{7}{12}\right) w^2+\left(\frac{109 \log (w)}{243}+\frac{877}{1458}\right) w^3+\ldots
\eea
The vanishing period $T_{con}$ at $w=0$ gives the 
closed flat coordinate at the conifold point, and the closed mirror map is obtained as usual by inverting the series.

The open mirror map is much more delicate.
As explained in \cite{BKMP}, it should be given by a linear combination of solutions of the extended Picard-Fuchs system. That is, by a linear combination of the constant solution, the
closed periods \eqref{cmm}, the 
solution (see \cite{BKMP}):
\be
u=\log(p-3)+\frac{1}{3}\log\left(\frac{w-1}{27}\right),
\ee
and the other relevant solution which is given by the disk amplitude.\footnote{By disk amplitude here we mean its completion with classical terms such that it is a solution of the extended Picard-Fuchs system.} At the critical point $(w,p)=(0,0)$, the latter reads
\bea
\label{aj}\nonumber
F^{(0,1)}&=&\ri \sqrt{3}\Biggl[
\frac{ p^2}{36}+\frac{11  p^3}{972}+
\frac{47  p^4}{11664} + \ldots
+w
\left(-\frac{1}{3}  \log (p) -\frac{ p}{54}+\frac{ p^2}{324}+
\frac{23  p^3}{13122} + \ldots
\right)\\&&
+w^2
   \left(-\frac{11}{54}  \log (p)-\frac{1}{2 p}+\frac{1}{2
   p^2}-\frac{29  p}{2916}
   +\frac{31  p^2}{17496} + \ldots\right) +\ldots\Biggr].
\eea
Therefore, generically, the open flat coordinate should be  of the form:
\bea
P=A\   u+B \  F^{(0,1)}+C\   T_{con}+D\   T^D_{con}+G.
\eea

We can directly set $B$ and $D$ to zero, as both  $F^{(0,1)}$ and 
$T^D_{con}$ contain a logarithm which would then introduce non-trivial monodromy in the physical disk amplitude.
We further decide to fix $A=-1$ and $G= \frac{4  \pi \ri}{3}$, for the following reasons. First, fixing $A$ just fixes the overall scale of the map. For instance, for $A=-1$ we get that
\be
P(p,w) = \frac{p}{3} + \frac{w}{3} + \ldots + C\  T_{con} + G - \frac{4 \pi \ri}{3}.
\ee
Then, we fix $G = \frac{4 \ri \pi}{3}$ to cancel the constant term in the $p,w$ expansion, as we want the flat coordinate to vanish at $(p,w)=(0,0)$. We then obtain
\be
P(p,w)=\frac{p}{3}+\frac{p^2}{18} + \ldots +\frac{w}{3}+\frac{w^2}{6} + \ldots + C\  T_{con},
\ee
and the inverse mirror map reads:
\bea
p&=&\nonumber
-(3C+1) T_{con}-\frac{1}{18} \left(27 C^2+18 C+1\right) T_{con}^2+ \ldots \\&&
+ \left(3+(3 C+1) T_{con}+ \frac{1}{18} \left(27 C^2+18 C+1\right) T_{con}^2+ \ldots \right) P+\ldots\eea
We did not however find any argument to fix the constant $C$ in the open mirror map. As a result, we are left with a one-parameter family of open mirror maps at the conifold, parameterized by $C$.

Now, as we already mentioned in footnote \ref{f:singular}, the conifold point is a singular limit for the
mirror curve, and one needs to choose appropriate coordinate on the resolution. To smooth out the singularity, as in \cite{EO} we  introduce the rescaled open flat coordinate:
\be
X=P \sqrt{T_{con}}.
\ee
We will then expand the open amplitudes in the flat coordinates $X$ and $T_{con}$.

Conifold amplitudes can be easily obtained 
with the method developed in section 2.
As for the orbifold point studied in section 3, basically we only need to input the flat coordinate $T_{con}$ in the functionals $W^{(g,h)}$, and expand the result in the flat coordinates $T_{con}$ and $X$ at the conifold point. We obtain the following results.

The disk amplitude at the conifold point reads:
\bea
\nonumber
 F^{(0,1)}&=&\Biggl[X+\frac{3 X^2}{4}+\ldots\biggr]
   T_{con}+
\biggl[-\frac{3 C X}{2}+\left(\frac{3
   C}{8}+\frac{1}{8}\right) X^2+\ldots\biggr] T_{con}^{3/2}\\&&
+\biggl[-\left(\frac{C}{4}+\frac{1}{72}+\frac{3 C^2 }{8}\right) X + \ldots\biggr] T_{con}^2+ \ldots .
\eea 
The annulus amplitude:
\bea
\nonumber F^{(0,2)}&=&\frac{3 X_1 X_2}{16}+\frac{9 X_1^2 X_2^2}{512}+\frac{9( X_1^3 X_2+ X_1 X_2^3)}{256}+\ldots\\&& \nonumber+
 \biggl[\frac{1}{64} \left(-9C+1\right) (X_2 X_1^2+ X_1 X_2^2)+\ldots\biggr]\sqrt{T_{con}}\\&&+ \biggl[\left(\frac{1}{24}  -\frac{C}{16}+\frac{9C^2}{32}\right)X_1
   X_2+\cdots \biggr]T_{con}+\ldots,
\eea
and the genus $1$, one-hole amplitude:
\bea
\nonumber 
F^{(1,1)}&=&\left(\frac{3}{32}X+\frac{15}{256}X^3+\cdots\right) \frac{1}{ T_{con}}+
   \biggl[
      \frac{
   1}{256} \left(1-45C\right) X^2
   +\ldots \biggr] \frac{ 1}{ \sqrt{T_{con}}    } \\&&+
\left(\frac{45  C^2}{256}-\frac{ C}{128}-\frac{77 }{2304}\right) X+\ldots+\nonumber
 \biggl[-\frac{15 C^3}{256}+\frac{C^2}{256}+\frac{77 C}{2304}+\frac{83}{6912}+
\\&&  + \left( \frac{135  C^2}{256} +\frac{595 C}{3072}+\frac{35}{9216} \right)X ^2
+\ldots\biggr]\sqrt{T_{con}}+\ldots.
\eea
These amplitudes  have indeed the expected leading 
behavior $T_{con}^{2-2g-h}$, as explained in \eqref{e:leading}.
However the subleading terms are not vanishing (for any value of $C$), 
and so there is no simple gap behavior.

\end{document}